\titlespacing*{\section}{0pt}{0.1\baselineskip}{.01\baselineskip}
\titlespacing*{\subsection}{0pt}{.1\baselineskip}{.01\baselineskip}
\titlespacing{\paragraph}{0pt}{.10\baselineskip}{.05\baselineskip}
\newtheorem{theorem}{Theorem}
\newtheorem{lemma}[theorem]{Lemma}
\theoremstyle{definition}
\newtheorem{definition}{Definition}[section]
\newtheorem{assumption}{Assumption}[section]
\newtheorem*{remark}{Remark}
\newcommand{\M}{\mathcal{M}}
\title{Secure Byzantine-Robust Distributed Learning via Clustering}
\author{%
  Raj Kiriti Velicheti\\
  Coordinated Sciences Laboratory\\
  University of Illinois at Urbana-Champaign\\
  \texttt{rkv4@illinois.edu} \\
  \And
  Derek Xia\\
  Department of Computer Science\\
  University of Illinois at Urbana-Champaign\\
  \texttt{derekx3@illinois.edu} \\
  \And
  Oluwasanmi Koyejo\\
  Department of Computer Science\\
  University of Illinois at Urbana-Champaign\\
  \texttt{sanmi@illinois.edu} \\
}
\begin{document}

\maketitle

\begin{abstract}
    Federated learning systems that jointly preserve Byzantine robustness and privacy have remained an open problem. Robust aggregation, the standard defense for Byzantine attacks, generally requires server access to individual updates or nonlinear computation -- thus is incompatible with privacy-preserving methods such as secure aggregation via multiparty computation. To this end, we propose SHARE (Secure Hierarchical Robust Aggregation), a distributed learning framework designed to cryptographically preserve client update privacy and robustness to Byzantine adversaries simultaneously. The key idea is to incorporate secure averaging among randomly clustered clients before filtering malicious updates through robust aggregation. Experiments show that SHARE has similar robustness guarantees as existing techniques while enhancing privacy.
\end{abstract}

\section{Introduction}\label{Sec: Intro}
An increasing amount of data is being collected in a decentralized manner on devices across institutions\cite{kairouz2019advances}. 
Traditionally, machine learning with such devices require centralized data collection, which increases communication costs while posing a threat to privacy, especially when these devices gather personal user data. Distributed learning frameworks like federated learning attempt to address these issues by sharing model updates from client devices, rather than data, to a centralized server~\citep{konevcny2016federated,kairouz2019advances,hard2018federated,li2020federated}. 


Among the most popular implementations of federated learning is Federated Averaging~\citep{mcmahan2017communication}. While the central coordinating server follows a designated aggregation protocol, the required communication can pose a privacy threat when the system is compromised by a malicious external agent leaking individual model updates. To this end, \citet{bonawitz2017practical} proposed a secure averaging oracle that masks individual client updates such that the server learns their average alone. Nevertheless, since the collaboratively learned model update includes the contribution of all participating clients, benign averaging might fall prey to incorrect device updates either due to arbitrary failures or maliciously crafted updates preventing the devices from learning a good model. 

In recent years, federated learning robustness to Byzantine failures  (i.e., worst-case adversarial coordinated training-time attacks) has gained attention. However, existing robust aggregation techniques require sophisticated nonlinear operations~\citep{xie2019slsgd,xie2020zeno++,blanchard2017machine}, sometimes with server access to individual model updates in the clear -- thus leading to privacy loss. These nonlinear operations adversely affect privacy since privacy-preserving methods such as secure Multi-Party Computation~(MPC) are inefficient for nonlinear operations~\citep{bonawitz2017practical}. This observation highlights a fundamental tension between existing solutions to the two critical problems of privacy and robustness.

We prose a novel hierarchical framework that decouples MPC-based privacy and Byzantine robustness protection mechanisms in this work. The basic idea is to implement a secure averaging oracle among randomly clustered clients, then filtering these updates using robust aggregation. This approach reveals only the cluster averaged update to the server, thus can help preserve privacy. Simultaneously, the second level of robust aggregation helps to maintain Byzantine robustness.

Taken together, this manuscript proposes a federated learning architecture that preserves security and privacy jointly, thus addressing this gap in the literature. Due to the hierarchical approach, existing robust distributed learning frameworks~\citep{xie2019slsgd,xie2020zeno++,blanchard2017machine} and non-robust secure distributed learning frameworks like \citep{bonawitz2017practical,bonawitz2019towards} can be considered special cases of our proposed approach. To the best of our knowledge, ours is the first approach that scalably combines Byzantine-robustness with privacy using the common single-server architecture. {\bf Summary of contributions:} We propose SHARE; a robust distributed learning framework which flexibly incorporates any Byzantine-robust defenses while enhancing privacy in a single server systems setting. We extend existing theoretical guarantees of robust aggregation oracles to the SHARE framework. Further, we present empirical evaluation of SHARE on benchmark datasets. 
\begin{figure}[h]
\centering
\includegraphics[scale=0.35]{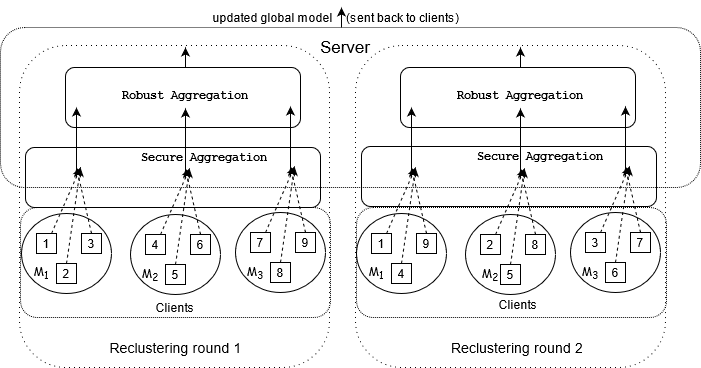}
\caption{This figure illustrates our proposed framework SHARE. Each global round consists of multiple reclustering rounds, updates from which are averaged to obtain the final model update. In each reclustering round (shown by dotted rectangle), updates from clients (numbered squares) are clustered randomly ($\mathcal{M}_1,\mathcal{M}_2,\mathcal{M}_3$), then averaged followed by robust aggregation.} 
\label{fig:shareillus}
\end{figure} 
\section{Related Work}
\label{Sec: Relatedwork}
Byzantine robustness and secure aggregation in distributed learning both have a large existing literature -- though often from different (somewhat disconnected) communities. 
\paragraph{Robust Aggregation.} Robustness to Byzantine adversaries is a well-studied problem in distributed and federated learning~\citep{lamport2019Byzantine}. Broadly, existing defenses can be categorized into distance-based robust aggregation or validation data-based aggregation. The general idea behind distance-based metrics is to find an update closer to the benign mean in $l_2$ norm distance. \citet{xie2019slsgd} suggest utilizing coordinate-wise trimmed mean. \citet{blanchard2017machine} choose a model update closest to most other updates.
\citet{ghosh2019robust} suggest optimal statistical rates utilizing median and trimmed mean. All these distance-based defenses require a majority of the clients participating in the protocol to be benign. On the other hand, validation data-based aggregation defenses such as Zeno~\citep{xie2019zeno,xie2020zeno++} perform suspicion-based aggregation based on a score evaluated on validation data held at the server. These methods can tolerate arbitrary Byzantine poisoning. All the above techniques require the server to see the local model updates in the clear, posing a privacy threat.

\paragraph{Privacy via Secure Aggregation.} In many distributed learning settings, the secure computation boils down to computing a secure average. \citet{bonawitz2017practical} utilizes a pairwise secret share to achieve the same. 
\citet{aono2017privacy} follow a slightly different approach and utilize additively homomorphic encryption for secure update computation. While these methods work well with linear aggregation methods, extending secure multiparty computation to non-linear robust aggregation schemes introduces additional computational and communication overhead, quickly becoming impractical for real computational loads.

\citet{he2020secure, pillutla2019robust,wang2021bidirectional} are the closest to our work in the sense that they are proposed to address the problem of robustness and privacy in distributed learning jointly. Compared to \citet{he2020secure}, which requires two non-colluding servers, we achieve this with a single server, which may be a more realistic architecture for practical use cases. Further, \citet{he2020secure} tailor their approach to distance-based robust aggregation. In contrast, our proposed approach is easily combined with most existing Byzantine-robust aggregation schemes, including filtering-based defenses such as Zeno++~\citep{xie2020zeno++}. On the other hand, \citet{pillutla2019robust} reduce the filtering computation of the median into a sequence of linear computations. Unfortunately, this approach requires that the Byzantine device follows the computational protocol over multiple rounds, which is a strong assumption in practice. Thus, while inspired by Byzantine tolerance, \citet{pillutla2019robust} do not claim Byzantine robustness.
After completing this work, we were made aware of a related approach \citep{wang2021bidirectional} using a hierarchical architecture with a robust mean aggregator. Compared to \citet{wang2021bidirectional}, our approach is a wrapper method that can be combined with any robust aggregator, with analysis and performance depending on the choice of aggregator (e.g., we analyze and compare trimmed mean, krum, Zeno++). Further, we address the problem of signal loss due to clustering via a novel reclustering step. 


\section{Problem Formulation}
\label{Sec: Problem}
We consider the optimization problem $\min_{x\in\mathbb{R}^d} F(x)$ where, $F(x)=\frac{1}{n}\sum_{i\in[n]}\mathbb{E}_{z_i\sim \mathcal{D}_i}f_i(x;z_i)$, hence the goal is to learn a model $x$ which performs well on average using $z_i$ sampled from local data distribution $\mathcal{D}_i,\forall i\in[n]$. The notations used in this paper are summarized in Table \ref{tab: notations} (Appendix A).

This problem is solved in a distributed and iterative manner. In each global iteration ($t<T$), sampled clients compute a private model update ($\Delta x_i^K$) by running multiple steps (K-steps) of Stochastic Gradient Descent~(SGD) on the local data available (${z_i\sim\mathcal{D}_i}$). Then server can computes a global model update. For instance, when using simple averaging, the server update is $x^{t}=x^{t-1}+\eta\sum_{i\in[n]}\Delta x_i^K$, where $\eta$ is global learning rate. 
We consider the following privacy and security threats:
\begin{itemize}
\item {\em Privacy threat model}: We consider an honest but curious server. This specification allows the server to interpret the device data from the updates, hence breaching privacy. The assumption of honest server implies that the server still follows the underlying protocol. 
\item {\em Robustness threat model}: We consider a fixed (unknown) subset ($q$) of machines that can co-ordinate and send arbitrary updates to the server hence deviating from the intended distributed learning protocol.
\end{itemize}
\section{Methodology}
\label{Sec: Solution}
We propose two-step hierarchical aggregation SHARE (Secure Hierarchical Robust Aggregation) as a defense against the specified robustness and privacy threat models. In particular, our approach allows a decoupling of the security and robustness into two steps (as illustrated in Figure \ref{fig:shareillus}). First, in every global epoch, all participating clients are clustered randomly into groups. Clients within each group share pairwise secret keys and utilize them to mask their individual updates such that the server only learns the average within the cluster. This ensures client update privacy. These client cluster updates are then filtered using Byzantine-robust aggregation techniques. Further, we can repeat this process multiple times in a global epoch to aid in reducing variance. The detailed algorithm is outlined in Algorithm \ref{algo:Share}. Without loss of generality, we assume clusters of uniform size. 

\begin{algorithm}
\caption{SHARE (Secure Hierarchical Robust Aggregation)}\label{algo:Share}
\begin{algorithmic}[1]
\item\underline{\textbf{Server:}}\\
\FOR{$t = 0, \; \dots \; , T-1$}
\FOR{$r=1,\;\dots\;,R$}
\STATE Assign clients to clusters $\mathcal{S}=\M_1\cup\;\dots\; \M_i\;\dots\;\cup \M_c$ with $|\M_i|=|\M_j|\forall i,j\in[c]$
\STATE Compute secure average $g_j^r\gets\texttt{SecureAggr}(\{\Delta_i\}_{i\in \M_j})=\sum_{i\in\mathcal{M}_j} u_i,\forall j\in[c]$
	\STATE $g^r\gets\texttt{RobustAggr}(\{g_j^r\}_{j\in[c]})$
\ENDFOR
\IF {stopping criteria met}
        \STATE break
    \ENDIF
\STATE Push $x^t=x^{t-1}+\eta\frac{1}{R}\sum_rg^r$ to the clients    
\ENDFOR\\
\underline{\textbf{Client:}}
    \FOR{each client $i \in \mathcal{S}$ (if honest) \textbf{in parallel}}
        \STATE $x^{t}_{i, 0} \gets x^t$
        \FOR{$k = 0, \; \dots \; , K-1$}
            \STATE Compute an unbiased estimate $g^{t}_{i,k}$ of $\nabla f_i(x^{t}_{i,k})$
            \STATE $x^{t}_{i,k+1} \gets$ \texttt{ClientOptimize}$(x^{t}_{i,k}, g^{t}_{i,k}, \eta_{l}, k)$
        \ENDFOR
        \STATE $\Delta_i =\frac{n_i}{n}( x^{t}_{i,K} - x^t)$
        \STATE Push $\Delta_i$ to the assigned clusters using secure aggregation\\ 
    \ENDFOR\\
\RETURN  $x^T$
\end{algorithmic} 
\end{algorithm}
\subsection{System Components}
\paragraph{Secure Aggregation:} This is the first step in hierarchical aggregation. We follow an approach similar to \cite{bonawitz2017practical}, using pairwise keys between clients in a cluster. The server in this setup learns just the mean and hence the privacy of individual client updates are protected (Detailed discussion in Appendix C).
\paragraph{Robust Aggregation:} This is the second step in every reclustering round. In this step, the secure cluster averages are filtered through robust aggregation. The goal ideally is to eliminate clusters with malicious client updates. Any existing robustness techniques like trimmed mean\cite{xie2019slsgd}, median\cite{pillutla2019robust} or Zeno\cite{xie2020zeno++} can be utilized at this stage. We show theoretical guarantees and experiments based on existing methods in the following sections.  
\paragraph{Random Reclustering:} As specified in Algorithm \ref{algo:Share}, we repeat the secure aggregation followed by robust aggregation multiple times randomizing client clusters in each global epoch. Note that across these reclustering rounds, the same local model update is paired with different clients each time. In addition to malicious updates, benign updates paired with malicious clients might be filtered in the proposed approach. Reclustering helps mitigate this loss of signal and hence reduces variance. In particular, as number of reclustering rounds ($R$) increase, the probability of this loss in signal decreases (Detailed discussion in Appendix E). 
\begin{remark} Although reclustering increases communication cost, we note that in addition to helping decrease the variance, reducing secure aggregation to within clusters, decreases communication cost as pairwise key exchange is now limited to within the cluster. Hence overall, communication cost for each client changes from $\mathcal{O}(n)$ to $\mathcal{O}(\frac{Rn}{m})$. In experiments, we often find that even a single clustering round gives good results (Section \ref{Sec: Experiments}). 
\end{remark}
\section{Theory}\label{Sec: Theory}
\subsection{Exactness} 
Algorithm \ref{algo:Share} can be implemented using any aggregation technique. However, due to clustering, the result is resilient to fewer malicious clients -- as (in the worst case) malicious clients are assumed to completely corrupt their assigned cluster. We formalize these ideas next, with proofs in Appendix B.
\begin{lemma}
If robust aggregation is replaced by averaging, the output of Algorithm \ref{algo:Share} is identical to Federated Averaging\cite{mcmahan2017communication}.
\end{lemma}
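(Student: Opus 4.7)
The plan is to unroll the two-level aggregation in Algorithm~\ref{algo:Share} and observe that, once the outer \texttt{RobustAggr} call is replaced by an average, the composite server-side operation becomes a single \emph{linear} functional of the client updates $\{\Delta_i\}_{i\in\mathcal{S}}$ that does not depend on the random cluster assignment. Once that is established, both the reclustering loop and the outer $\frac{1}{R}\sum_r$ become trivial and the iterate collapses to the Federated Averaging update of \citep{mcmahan2017communication}.

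First I would note that the client block of Algorithm~\ref{algo:Share} is verbatim Federated Averaging: each honest client runs $K$ steps of local SGD initialized at $x^t$ and returns $\Delta_i=\tfrac{n_i}{n}(x^{t}_{i,K}-x^{t})$. Nothing there touches the aggregation protocol, so the lemma reduces to showing that the server-side rule coincides with FedAvg. Second, I would invoke the correctness of the Bonawitz et al.\ pairwise-mask scheme used in \texttt{SecureAggr}: the masks cancel in the cluster sum, so for every reclustering round $r$ and every cluster $\mathcal{M}_j$ we recover exactly $g_j^{r}=\sum_{i\in\mathcal{M}_j}\Delta_i$.

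Third, because the clusters $\mathcal{M}_1,\dots,\mathcal{M}_c$ partition $\mathcal{S}$, replacing \texttt{RobustAggr} by the simple average yields
\[
g^{r}=\tfrac{1}{c}\sum_{j=1}^{c}g_j^{r}=\tfrac{1}{c}\sum_{j=1}^{c}\sum_{i\in\mathcal{M}_j}\Delta_i=\tfrac{1}{c}\sum_{i\in\mathcal{S}}\Delta_i,
\]
which depends only on the multiset $\{\Delta_i\}_{i\in\mathcal{S}}$ and not on the particular clustering drawn in round~$r$. Consequently $g^{r}$ is the same for every $r\in[R]$, so $\frac{1}{R}\sum_{r=1}^{R}g^{r}=\frac{1}{c}\sum_{i\in\mathcal{S}}\Delta_i$, and the server push becomes $x^{t}=x^{t-1}+\tfrac{\eta}{c}\sum_{i\in\mathcal{S}}\Delta_i$. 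Absorbing the constant $1/c$ into the global learning rate recovers the FedAvg update $x^{t}=x^{t-1}+\eta'\sum_{i\in\mathcal{S}}\tfrac{n_i}{n}(x^{t}_{i,K}-x^{t})$.

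The only genuine obstacle here is the bookkeeping of scaling conventions: one has to reconcile the client weights $n_i/n$, the implicit $1/c$ coming from averaging the cluster sums, and the global learning rate $\eta$ with whichever normalization is chosen for the baseline FedAvg iteration. Everything else is a one-line consequence of two facts that are built into the algorithm by design: secure aggregation computes an \emph{exact} sum, and averaging commutes with partitioning, so a hierarchical linear aggregator equals its flat counterpart regardless of how the hierarchy is formed.
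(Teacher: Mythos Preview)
Your argument is correct and follows the same route as the paper: both observe that once \texttt{RobustAggr} is replaced by a linear combination, the double sum over clusters collapses to $\sum_{i\in\mathcal{S}}\Delta_i$ independently of the clustering, making the reclustering loop and the $\tfrac{1}{R}\sum_r$ vacuous. The only cosmetic difference is normalization---the paper reads ``averaging'' as the cluster \emph{sum} (so no $1/c$ appears), whereas you carry the $1/c$ and absorb it into $\eta$; you already flag this bookkeeping explicitly, so there is no gap.
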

\begin{lemma}\label{lem: robustaggr}
In presence of robust aggregation, Algorithm \ref{algo:Share} is robust to $q=\frac{q_0}{m}$ adversaries, where $q_0$ is the tolerance limit of the robust aggregation oracle followed and $m$ is the cluster size.
\end{lemma}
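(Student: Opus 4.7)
}
The plan is to reduce the robustness analysis of SHARE to that of the underlying robust aggregation oracle applied at the cluster level. First I would argue that a single Byzantine client inside a cluster $\M_j$ is sufficient to make the secure cluster average $g_j^r$ arbitrary. Since \texttt{SecureAggr} computes $g_j^r=\sum_{i\in \M_j} u_i$ where $u_i$ is the masked contribution of client $i$, the masks cancel out but the payload of any malicious $i\in\M_j$ is free; one adversary can therefore pick its update so that $g_j^r$ attains any desired value in $\mathbb{R}^d$. Consequently, every cluster that contains at least one adversary must be counted as corrupted in the worst case, and the only clusters guaranteed to deliver the true benign mean are those composed entirely of honest clients.

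Next I would bound the number of corrupted clusters as a function of the number of adversaries. There are $c=n/m$ clusters of equal size $m$, and the adversary is free to choose which clusters to place its $q\cdot n$ corrupted clients in (recall the assignment in line~3 of Algorithm~\ref{algo:Share} is public). Since one adversary per cluster already spoils the whole cluster, the worst-case number of corrupted clusters is $\min(qn,\,c)$; after dividing by $c$, the worst-case corrupted fraction of inputs fed into \texttt{RobustAggr} is $\min(qm,\,1)$. This is the quantity to which the guarantee of the outer aggregator must be compared.

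Finally I would invoke the tolerance assumption on \texttt{RobustAggr}: by definition, it returns the correct benign quantity provided that at most a fraction $q_0$ of its inputs are corrupted. Substituting the worst-case corrupted-cluster fraction, SHARE succeeds as long as $qm\le q_0$, i.e.\ $q\le q_0/m$, which is precisely the claim. Honesty of the server (from the privacy threat model) is used implicitly to ensure that \texttt{RobustAggr} is actually executed on the secure cluster averages; no further property of the clustering assignment is required.

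The main obstacle is the first step: making precise the claim that a single Byzantine participant in a cluster entirely controls the cluster output. This hinges on the specific secure-aggregation instantiation (pairwise masking as in Bonawitz et al.), where the sum of masks is zero and each client's contribution enters additively in the clear after unmasking; I would cite the description of \texttt{SecureAggr} in Section~4 / Appendix~C to justify this. Once that is in place, the rest is a counting argument and a direct appeal to the oracle's stated tolerance, so no nontrivial calculation remains.
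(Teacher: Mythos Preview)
Your proposal is correct and follows the same worst-case counting argument as the paper: one Byzantine client per cluster suffices to corrupt the cluster average, so $q$ adversaries can spoil up to $q$ clusters, and comparing this to the oracle's tolerance yields $q\le q_0/m$. Your write-up is in fact more explicit than the paper's two-sentence proof, which simply asserts the worst case without spelling out why one malicious participant controls a whole secure sum.

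Two small notes. First, in the paper $q$ and $q_0$ are \emph{counts} (see Table~\ref{tab: notations}), not fractions; your phrasing ``$q\cdot n$ corrupted clients'' and ``corrupted fraction $qm$'' treats them as fractions. The algebra lands on the same inequality $q\le q_0/m$ either way, but you should align the notation with the paper. Second, the adversary does not literally choose cluster placements---line~3 of Algorithm~\ref{algo:Share} assigns clusters randomly at the server---so the correct justification for analyzing the spread-out configuration is that randomness may produce it (this is exactly what the paper says: ``there might still exist such attack favorable rounds''). Neither point affects the validity of your argument.
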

\subsection{Convergence Analysis}
 To highlight the flexibility of the proposed algorithm, we analyze convergence when using both using a distance based robust aggregation strategy or a validation data based aggregation strategy, such as Zeno++.
 We first define the few terms used to develop convergence analysis.
\begin{definition}[(G,B)-Bounded Gradient Dissimilarity]\label{bgd}
There exists constants $G\geq0,B\geq1$ such that $\frac{1}{n}\sum_i^n\|\nabla f_i(x)\|^2\leq G^2+B^2\|\nabla F(x)\|^2$
\end{definition}
\begin{definition}[Bounded client updates variance]\label{sigmag}
We define benign mean model update across clients to be $\mu=\sum_i\Delta_i^K$, hence the variance across client updates as $\mathbb{E}[\|\Delta_i^K-\mu\|^2]\leq\sigma_g^2$ for all $i$ across all rounds of training
\end{definition}
\begin{definition}[Bounded variance]\label{sigma}
For an unbiased stochastic gradient estimator with $g_i(x)=\nabla f_i(x,z_i)$ we define bounded variance as $\mathbb{E}_{z_i}[\|g_i(x)-\nabla f_i(x)\|]\leq\sigma^2$ for any $i,x$
\end{definition}
The difference between Definition~\ref{sigmag} and \ref{sigma} is that the former bounds the variance between model updates across clients while the latter bounds the variance across gradient estimates within the same client. 
\subsubsection{Convergence Rates}
 We now prove that Algorithm \ref{algo:Share} converges for various robust aggregation oracles. Firstly, we state a few general assumptions required to prove convergence guarantees standard in papers.
 \begin{assumption}\label{minima}
 There exists at least one global minima $x^*$ such that $F(x^*)\leq F(x),\forall x$
 \end{assumption}
 \begin{assumption}\label{taylor}
 We assume that $F(x)$ is L-smooth and has $\mu$-lower bounded Taylor approximation ($\mu$ weak convexity)
 \begin{align*}
     \langle\nabla F(x),y-x\rangle+\frac{\mu}{2}\|y-x\|^2\leq F(y)-F(x)\leq\langle\nabla F(x),y-x\rangle+\frac{L}{2}\|y-x\|^2
 \end{align*}
 \end{assumption}
 Note that this Assumption \ref{taylor} covers the case of non-convexity by taking $\mu<0$.
 We note that each distance based robust aggregation metric have different bounds from benign mean update. Since the focus of this work is to propose an algorithm that unifies robustness with privacy, we do not concentrate on those bounds and absorb such intricacies into an order constant. Formally,
 \begin{assumption}\label{robustaggr}
 For any distance based robust aggregation algorithm, \textit{when fraction of faulty inputs is below threshold}, the output of robust aggregation is bounded from benign mean. That is, we assume there exists a $V_2$ such that for any set of vectors $\{v_i: i\in\mathcal{C}\}$, replaced by faulty vectors $\forall i\notin \mathcal{C}_t\subseteq\mathcal{C}$,  $\|\texttt{RobustAggr}(\{v_i\}_{i\in\mathcal{C}})-\frac{1}{|\mathcal{C}_t|}\sum_{i\in\mathcal{C}_t}v_i\|\leq\mathcal{O}( V_2)$.
 \end{assumption}
We note that Assumptions \ref{minima},\ref{taylor} are standard among existing Federated Learning literature~\citep{karimireddy2020scaffold,xie2019zeno,xie2020zeno++}. Additionally Assumption \ref{robustaggr} is a direct consequence of existing distance based robust aggregation oracles~\citep{xie2019slsgd,pillutla2019robust,blanchard2017machine}. Finally, for Algorithm \ref{algo:Share} with such oracles, we have the following theorem
\begin{theorem}\label{Theorem: distaggr}
Consider a function F(x) satisfying Assumptions \ref{minima},\ref{taylor} assume a robust aggregation scheme that picks up $b$ updates and satisfies Assumption \ref{robustaggr}, further, assume (G,B)-Bounded gradient dissimilarity, $\sigma_g^2$ variance in client updates and $\sigma^2$ variance in gradient estimation, there exists $\eta,\eta_l$ such that output of Algorithm \ref{algo:Share} after T rounds, $x^T$, satisfies,
\begin{align*}
    \mathbb{E}\left[\|\nabla F(x^T)\|^2\right]\leq\mathcal{O}\left(\frac{LM\sqrt{F}}{\sqrt{TKn}}+\frac{F^{2/3}(LG^2)^{1/3}}{(T+1)^{2/3}}+\frac{B^2LF}{T}+2L^2V_2+\frac{\sigma_g^2}{bm}\left(\frac{n-q-bm}{R(n-q)-1}\right)\right)
\end{align*}
where $M^2:=\sigma^2(1+\frac{n}{\eta^2})$ and $F:=F(x^0)-F(x^*)$
\end{theorem}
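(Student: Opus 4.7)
The plan is to follow a standard L-smoothness descent analysis, treating the effective server update at round $t$ as a biased, noisy estimator of the true benign gradient, and carefully accounting for the three sources of error that clustering introduces.

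First, I would apply Assumption~\ref{taylor} to obtain the one-step descent inequality
\begin{equation*}
F(x^{t+1}) \leq F(x^t) + \eta\,\langle \nabla F(x^t),\, \tfrac{1}{R}\sum_r g^r \rangle + \tfrac{\eta^2 L}{2}\,\|\tfrac{1}{R}\sum_r g^r\|^2,
\end{equation*}
and then decompose $\frac{1}{R}\sum_r g^r$ into three pieces: (i) the exact benign mean of all $n-q$ honest client updates after $K$ local steps, (ii) the deviation introduced because a single reclustering round only keeps the $b$ clusters (i.e.\ $bm$ clients) that survive robust aggregation, and (iii) the bias $\mathcal{O}(V_2)$ incurred by the robust aggregator relative to the benign mean of those surviving clusters (using Assumption~\ref{robustaggr}, together with Lemma~\ref{lem: robustaggr} to certify that the adversary fraction stays below the oracle's threshold when $q \le q_0/m$). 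Piece (iii) immediately contributes the $2L^2 V_2$ term after squaring and applying Young's inequality.

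Next, for piece (ii) I would view each reclustering round as a sample-without-replacement estimator for the benign mean: given a random partition into clusters of size $m$ and the selection of $b$ clean clusters, the cluster averages behave like drawing $bm$ honest clients without replacement from the $n-q$ honest ones. A standard finite-population variance calculation then yields a per-round variance of $\tfrac{\sigma_g^2}{bm}\cdot\tfrac{n-q-bm}{n-q-1}$ by Definition~\ref{sigmag}. Averaging $R$ (approximately independent) reclustering rounds divides this by $R$, producing the clustering term $\tfrac{\sigma_g^2}{bm}\bigl(\tfrac{n-q-bm}{R(n-q)-1}\bigr)$ stated in the theorem. This is the step I expect to be the main obstacle: carefully justifying the finite-population correction and the factor of $R$ requires making the joint distribution of the $R$ random partitions explicit, and dealing with the fact that successive reclusterings are not strictly independent (the clients are fixed) so a Hoeffding/variance-reduction argument with mild conditioning is needed.

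Finally, for piece (i), which captures the local-SGD effect inside each honest client, I would import a standard local-update analysis (in the spirit of SCAFFOLD or FedAvg on nonconvex smooth objectives): bound the client drift $\|x^{t}_{i,k}-x^t\|$ using Assumption~\ref{taylor} and Definition~\ref{bgd}, use Definition~\ref{sigma} to control the stochastic gradient noise, and absorb the resulting $K$-step error into $\|\nabla F(x^t)\|^2$ and a residual term. Summing the descent inequality over $t=0,\ldots,T-1$, telescoping $F(x^0)-F(x^*)=F$, and tuning $\eta$ and $\eta_l$ to balance the leading terms delivers exactly $\mathcal{O}\!\left(\tfrac{LM\sqrt{F}}{\sqrt{TKn}}+\tfrac{F^{2/3}(LG^2)^{1/3}}{(T+1)^{2/3}}+\tfrac{B^2LF}{T}\right)$ from this component, with $M^2=\sigma^2(1+n/\eta^2)$ arising from the stochastic-noise/local-step tradeoff. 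Adding the three contributions and applying Jensen's inequality to pass to $\min_t \mathbb{E}\|\nabla F(x^t)\|^2$ (which we identify with $x^T$ via standard output-selection) yields the claimed bound.
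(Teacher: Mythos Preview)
Your proposal is correct and follows essentially the same route as the paper: decompose the server update into (robust-aggregation bias $\mathcal{O}(V_2)$) $+$ (finite-population sampling error from selecting $bm$ of the $n-q$ honest clients) $+$ (standard local-SGD/FedAvg error handled by importing the SCAFFOLD analysis of \citet{karimireddy2020scaffold}), then combine via $L$-smoothness. The only place the paper differs from your sketch is the $R$-dependence you flagged as the obstacle: rather than averaging $R$ independent rounds and dividing the variance by $R$, the paper treats the $R$ reclusterings jointly as one sample of size $Rbm$ drawn without replacement from a population of size $R(n-q)$ and applies the finite-population variance formula directly, which is exactly what produces the denominator $R(n-q)-1$ in the stated bound.
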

Now we consider Zeno++\cite{xie2020zeno++}, a defense utilizing server data. Although score based Zeno++ was originally introduced for asynchronous SGD, we generalize it to federated learning setting hence allowing for multiple local epochs. We illustrate this modified algorithm in Appendix B. As in \citet{xie2020zeno++}, we consider an additional standard assumption 
\begin{assumption}\label{valassump}
The validation set considered for Zeno++ is close to training set, implying a bounded variance given by $\mathbb{E}[\|\nabla f_s(x)-\nabla F(x)\|^2]\leq V_1,\forall x$
\end{assumption}
\begin{theorem}\label{Theorem: Zeno}
Consider L-smooth and potentially non-convex functions $F(x)$ and $f_s(x)$, satisfying Assumption \ref{valassump}. Assume $\|f_s(x)\|^2\leq V_3,\forall x$. Further assuming G-bounded gradient dissimilarity, variance between client updates be $\sigma_g^2$ and variance in gradient estimation at each client be $\sigma$, with global and local learning rates of $\eta\leq\frac{1}{2L}$ and $\rho\geq\frac{\alpha\sqrt{\eta}}{6K^2\eta_l^2B^2}+\eta$, after T global updates, let $D:=F(x^0)-F(x^*)$, Algorithm \ref{algo:Share} with Zeno++ as robust aggregation converges at a critical point:
\begin{align*}
    \frac{\mathbb{E}[\sum_{t\in[T]}\|\nabla F(x_{t-1})\|^2]}{T}\leq\frac{\mathbb{E}[D]}{\alpha\sqrt{\eta}T}+\frac{\sqrt{\eta}}{\alpha}\mathcal{O}\left(\frac{\sigma_g^2}{m}\left(\frac{n-q-m}{R(n-q)-1}\right)+G^2+\sigma^2+V_1+V_3+\epsilon\right)
\end{align*}
\end{theorem}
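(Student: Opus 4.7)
The plan is to adapt the Zeno++ convergence analysis of \citet{xie2020zeno++} to the hierarchical SHARE setting. Two structural differences must be managed: (i) the Zeno++ acceptance test is applied to cluster averages $g_j^r$ rather than to individual client updates, and (ii) each candidate update encodes a multi-step local SGD drift rather than a single stochastic gradient. I would first derive a one-step descent inequality and then sum over $t\in[T]$.

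First I would invoke $L$-smoothness on the server update $x^t-x^{t-1}=\eta\bar g^t$, where $\bar g^t=\tfrac{1}{R}\sum_r g^r$, obtaining
\[
F(x^t)-F(x^{t-1})\le \eta\,\langle\nabla F(x^{t-1}),\bar g^t\rangle+\tfrac{L\eta^2}{2}\,\|\bar g^t\|^2.
\]
Because each $g^r$ was accepted by Zeno++, it satisfies $\rho\|g^r\|^2-\langle\nabla f_s(x^{t-1}),g^r\rangle\le\epsilon$. I would average this criterion across the $R$ reclusterings and then swap $\nabla f_s$ for $\nabla F$ using Assumption~\ref{valassump} together with Young's inequality; this converts the validation-score bound into an inequality on $F$, costs an $O(V_1)$ additive term, and leaves a strictly negative $-\rho\|\bar g^t\|^2$ term that swallows the $\tfrac{L\eta^2}{2}\|\bar g^t\|^2$ smoothness penalty as soon as $\rho\gtrsim\eta$ (which is one of the two clauses of the hypothesis on $\rho$).

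Next I would relate $\bar g^t$ to $\nabla F(x^{t-1})$. Each cluster average $g_j^r=\tfrac{1}{m}\sum_{i\in\M_j}\Delta_i^K$ is an empirical mean over a uniformly chosen size-$m$ subset of benign clients (malicious clusters having been discarded by Zeno++). A standard finite-population without-replacement identity then gives a per-round variance of order $\tfrac{\sigma_g^2}{m}\cdot\tfrac{n-q-m}{n-q-1}$, which shrinks to the stated $\tfrac{\sigma_g^2}{m}\cdot\tfrac{n-q-m}{R(n-q)-1}$ after averaging the $R$ independent reclustering rounds. The internal $K$-step local SGD drift $\Delta_i^K-\eta_lK\,\nabla f_i(x^{t-1})$ is controlled by the $(G,B)$-bounded gradient dissimilarity (Definition~\ref{bgd}) and a standard local-SGD recursion, producing the $G^2$, $\sigma^2$, and $K^2\eta_l^2B^2$ contributions; the second clause of the hypothesis, $\rho\ge\tfrac{\alpha\sqrt\eta}{6K^2\eta_l^2B^2}+\eta$, is exactly what is needed so the drift is absorbed into the negative $\rho\|\bar g^t\|^2$ reservoir. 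The bound $\|\nabla f_s(x)\|^2\le V_3$ enters when Cauchy--Schwarz is applied to the cross term $\langle\nabla f_s,\bar g^t\rangle$ that the Zeno++ score inequality leaves behind.

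Finally, after assembling a per-round inequality of the form $F(x^t)-F(x^{t-1})\le -\alpha\sqrt{\eta}\,\|\nabla F(x^{t-1})\|^2+\sqrt{\eta}\cdot O(\text{noise})$, I would telescope from $t=1$ to $T$ and divide by $\alpha\sqrt{\eta}\,T$ to obtain the stated rate. The main obstacle is the coupled bookkeeping of three noise sources simultaneously: the Zeno++ acceptance slack $\epsilon$, the clustering-induced sampling variance, and the local multi-step drift. The threshold on $\rho$ is precisely engineered so that the single $-\rho\|\bar g^t\|^2$ reservoir suffices to absorb all three; carrying the Young-inequality constants tightly enough that a positive fraction of this reservoir survives after cancelling the $L$-smoothness term is where the delicate calculation lies.
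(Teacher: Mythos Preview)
Your proposal is correct and follows essentially the same route as the paper's own proof: start from the $L$-smoothness descent inequality, use the Zeno++ acceptance score to control the cross term, swap $\nabla f_s$ for $\nabla F$ at a cost of $V_1$, collapse the remaining quadratic terms into a negative $(\eta-\rho)\|g^r\|^2$ reservoir, expand $\|g^r\|^2$ via the local-SGD drift recursion and $(G,B)$-bounded gradient dissimilarity to expose the $6K^2\eta_l^2B^2\|\nabla F\|^2$ contribution (which the choice of $\rho$ is tuned to isolate), account for clustering via a finite-population sampling-without-replacement variance bound, and telescope. The only cosmetic difference is that you average over the $R$ reclusterings before applying the score inequality whereas the paper works round-by-round with $g^r$ and only invokes the reclustering at the sampling-variance step; the bookkeeping and the resulting constants are the same.
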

\begin{remark}
It can be seen from both Theorem \ref{Theorem: distaggr},\ref{Theorem: Zeno} that the additional terms, other than standard ones appearing in the convergence rate for federated learning~\citep{karimireddy2020scaffold}, depend on the error caused by the robust aggregation scheme utilized and variance reduction from reclustering. Further, higher number of reclustering rounds $R$ decreases the effect of additional variance. Finally when $R=1,m=1,q=0$, these recover existing results for federated learning with robust aggregation.
\end{remark}
\subsection{Privacy}
\paragraph{Curious server:} Since each client masks updates with random vectors as illustrated in Section \ref{Sec: Solution}, we note that if we execute the mentioned secure averaging oracle with threshold $t>\frac{m}{2}$, the protocol can deal with $\lceil\frac{m}{2}\rceil-1$ drop outs while learning nothing more than average. Reclustering introduces additional vulnerability as server can see multiple averages. In particular, the probability that server can decode a model update is $\mathcal{O}(1-\left(\frac{(m!)^c}{n!}\right)^R)$. Hence as R increases this gets closer to 1 as expected. Further, when all clients are in a single cluster ($m=n$, hence c=1), this is 0 as would be the case with secure averaging without robustness. Further discussion, including comments on privacy in presence of colluding curious clients can be found in Appendix C.
\section{Experiments}\label{Sec: Experiments}
In this section we evaluate the proposed algorithm SHARE with various defenses and corruption models. We conduct experiments on CIFAR-10~\cite{krizhevsky2009learning}~(Image classification dataset) and Shakespear (a language modeling dataset from LEAF~\cite{caldas2018leaf}). We note that we do not propose a new robustness technique but rather we propose a modified federated learning architecture to incorporate any robustness protocol in a privacy preserving manner. Hence we focus our experiments on capturing the effects of cluster sizes and reclustering rounds, hyperparameters introduced by our approach. We defer descriptions of detailed training architecture to Appendix D
\subsection{CIFAR-10}
 We train a CNN with two $5\times5$ convolutional layers followed by 2 fully connected layers\cite{mcmahan2017communication} on CIFAR-10 and report top-1 accuracy. We test SHARE incorporating various robust aggregation protocols such as Trimmed mean~\cite{xie2019slsgd}, Krum~\cite{blanchard2017machine}, Zeno++~\cite{xie2020zeno++}. For all experiments in this section, trimmed mean removes $2/3$ of the updates before computing the mean. Additionally, we consider two baselines, SHARE with no robust aggregation and SHARE with no attack. We consider homogeneous distribution of data across clients for experiments in this section. Experiments on heterogeneous data distributions can be found in Appendix D.
\subsubsection{Impact of cluster size}\label{clustersize}
We first test Byzantine-tolerance for various cluster sizes to mild attacks such as label-flip. In particular, malicious clients train on wrong labels (images whose labels are flipped, i.e., any label $\in\{0,\hdots,9\}$ is changed to 9-label). We consider 60 total clients of which $q=12$ being malicious.
\begin{figure}[h]
\centering
\subcaptionbox{Zeno++, q=12\label{zeno-lf}}
    {\includegraphics[width=0.32\textwidth]{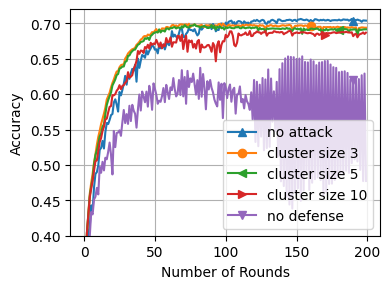}}
\subcaptionbox{Krum, q=12\label{krum-lf}}
    {\includegraphics[width=0.32\textwidth]{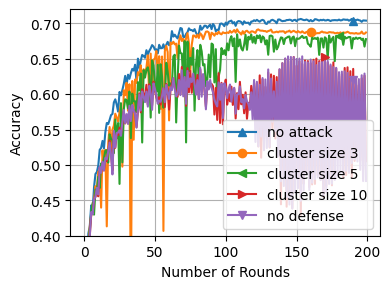}}
\subcaptionbox{Trimmed mean, q=12\label{trimmed-mean-lf}}
    {\includegraphics[width=0.32\textwidth]{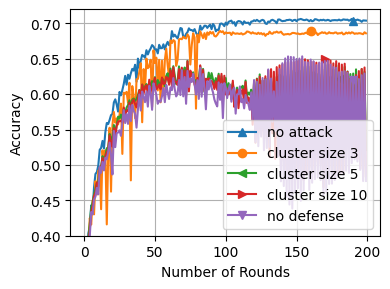}}
\caption{Results of SHARE with various defenses on CIFAR-10, utilizing varying cluster sizes under label flip attack. For Trimmed mean we remove 2/3 of input updates. 
\label{fig: labelflip}}
\end{figure}
The result is shown in Figure \ref{fig: labelflip} for various cluster sizes and robust aggregation protocols. It is seen that having no defense diverges even with mild attacks as expected. Further Figure (\ref{zeno-lf}) shows that SHARE with a strong defense like Zeno++ converges to benign (no-attack) accuracy for any of the considered cluster sizes. SHARE with trimmed mean and Krum both converge with cluster size 3 but as cluster size increases, accuracy decreases and SHARE begins to diverge. This can be seen directly from Lemma \ref{lem: robustaggr}, since we set trimmed mean to filter $q_0=(2/3)*60=40$ of updates, a cluster size of 3 implies the algorithm is robust against $q=40/3>12$ clients being malicious, hence the algorithm converges to benign accuracy, increasing the cluster size decreases this tolerance threshold and hence as shown in Figure (\ref{trimmed-mean-lf}) may fail to converge. Further experiments on scaled sign-flip attacks, are included in Appendix D due to space constraints.

\subsubsection{Impact of reclustering}
Intuitively, increasing the number of reclustering rounds increases the expected number of clusters without a Byzantine client. This hence increases the robustness of SHARE to higher fraction of Byzantine clients with defenses like Zeno++ which can tolerate arbitrary levels of poisoning. We test this hypothesis with several attack and clustering scenarios using sign-flipping attacks. 
\begin{figure}[h]
\centering
\subcaptionbox{$q=6$, $m=10$, -10$\Delta$\label{resample-normal}}
    {\includegraphics[width=0.32\textwidth]{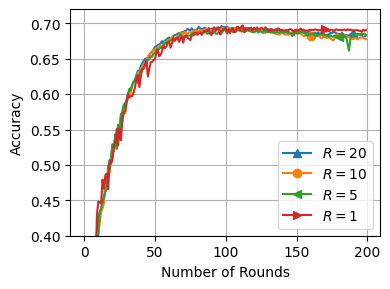}}
\subcaptionbox{$q=30$, $m=4$, -10$\Delta$\label{resample-byz}}
    {\includegraphics[width=0.32\textwidth]{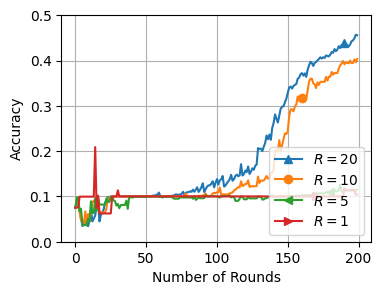}}
\subcaptionbox{$q=6$, $m=25$, -100$\Delta$ \label{resample-group}}
    {\includegraphics[width=0.32\textwidth]{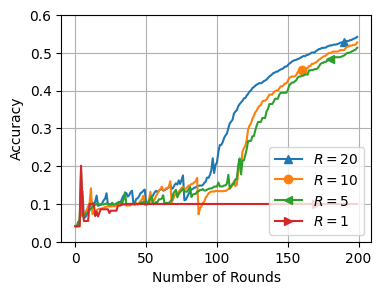}}
\caption{Results of SHARE with Zeno++
and different reclustering rounds $R$, Byzantine clients $q$, cluster sizes $m$ on CIFAR-10 with varying attack strengths (Any benign model update $\Delta$ is scaled to either $-10\Delta$ or $-100\Delta$). In (a),(b) we use $n=60$ and for (c) we use $n=100$.}
\end{figure}
In (a), we use a relatively small cluster size and a low fraction of Byzantine clients, so 1 round is sufficient. In (b), the fraction of Byzantine clients is high and in (c) the cluster size is large, which increases the probability of a cluster containing a Byzantine client, so $R>1$ helps converge to higher accuracies.
\subsection{Shakespeare}
We consider the first 60 speaking roles in the train set as our 60 clients. We train an RNN with 2 LSTM layers followed by 1 fully connected layer\cite{reddi2020adaptive} and report top-1 accuracy on the testing set.
\subsubsection{Empirical Evaluation}
In Figure \ref{fig: shake} we evaluate Byzantine tolerance of SHARE with Zeno++ under sign-flip attack (malicious clients send an update negative to the benign one $-\Delta$) and scaled sign-flip attack (malicious clients scale the update in addition to flipping the sign and hence send $-10\Delta$). A stronger attack like scaled sign-flip breaks benign averaging and Zeno++ works well with any of the chosen cluster sizes.
\begin{figure}[h]
\centering
\subcaptionbox{Zeno++, $-10\Delta$\label{shakespeare-10g}}
    {\includegraphics[width=0.32\textwidth]{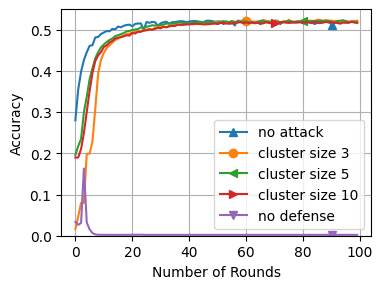}}
\subcaptionbox{Zeno++, $-\Delta$\label{shakespeare-g}}
    {\includegraphics[width=0.32\textwidth]{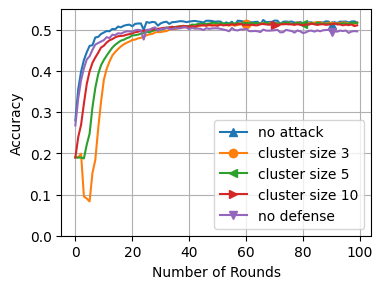}}
\caption{SHARE with Zeno++ defense and sign-flipping attack on Shakespeare.}\label{fig: shake}
\end{figure}
\section{Discussion and Conclusion}
We have proposed SHARE, a framework for implementing Byzantine-robustness and privacy. The key idea is hierarchical clustering. Cluster size is an important parameter that controls the trade-off between privacy and robustness. 
Further, reclustering is an important step and can help decrease variance and increase tolerance to the fraction of malicious clients when the defense can support arbitrary failures like Zeno++. In future, we would like to explore other variations in client clustering, especially in heterogeneous data settings. Further, we plan to work on stronger security guarantees even with multiple reclustering rounds.

\bibliography{refs}

\appendix


\newpage
 \appendix

 \section*{Appendix}
\section{Notations}  
\begin{table}[h]
\begin{tabular}{ | m{1.2cm}| m{13em}| m{1.4cm}|m{14em}|} 
 \hline
 Notation & Description & Notation & Description\\ \hline
 $n$ & Total number of clients &$c$ & Number of clusters \\
 $q$ & Number of faulty clients&$\mathcal{S}$ &set of all clients\\
 $K$ & Number of local SGD epochs  &$[m]$& The set of integers $\{1,\hdots,m\}$\\
 $T$ & Number of global epochs &$\{\M_j\}_{j\in[c]}$ &Set of client clusters\\
 $R$ & Number of resampling epochs &
 $n_i$ & Number of samples on worker $i$\\
 $b$ & Trim parameter for defense& $m$ & Number of clients in each cluster \\
 $\eta_l,\eta$ & Local and global learning rates &
 $\|\cdot\|$ & All norms in this paper are $l_2$-norms\\
 \hline
\end{tabular}\caption{\label{tab: notations} Notations utilized in this paper}
\end{table}
 \section{Proofs}
 In this section, we elaborate on theoretical guarantees of SHARE. We define the following quantity to aid the proofs that follow
  \begin{definition}[Clustered Client Update]
We define clustered client update as average model updates from all the clients assigned to a particular cluster. Mathematically, the clustered client update in a reclustering round $r\leq R$ is  given by $g_i^r=\sum_{j\in \mathcal{M}_k}\Delta_j$ where $\Delta_j$ denotes the model update from client $j$ belonging to cluster $\mathcal{M}_k$ after $K$ steps of SGD. 
\end{definition}
\begin{lemma}
If robust aggregation is replaced by averaging, output of Algorithm \ref{algo:Share} is identical to Federated Averaging\cite{mcmahan2017communication}.
\end{lemma}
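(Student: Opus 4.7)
The plan is to unfold the SHARE iteration with RobustAggr replaced by averaging and check, step by step, that every server iterate coincides with the corresponding FedAvg iterate. I would start on the client side: in both algorithms each honest client runs the same \texttt{ClientOptimize} routine for $K$ local steps starting from $x^t$, producing the pre-scaled update $\Delta_i = \tfrac{n_i}{n}(x^t_{i,K} - x^t)$. The SecureAggr oracle is functionally equivalent to revealing $\sum_{i\in\mathcal{M}_j}\Delta_i$ to the server (that is exactly what the pairwise-mask protocol of \cite{bonawitz2017practical} is designed to compute), so within any reclustering round $r$ we have $g_j^r = \sum_{i\in\mathcal{M}_j}\Delta_i$ with no information loss compared to a non-secure implementation.

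Next I would aggregate across clusters. Setting RobustAggr to the mean gives $g^r = \tfrac{1}{c}\sum_{j=1}^{c} g_j^r = \tfrac{1}{c}\sum_{j=1}^{c}\sum_{i\in\mathcal{M}_j}\Delta_i$. Since $\{\mathcal{M}_j\}_{j\in[c]}$ partitions $\mathcal{S}$, every client appears in exactly one cluster, and the double sum collapses to $\tfrac{1}{c}\sum_{i\in\mathcal{S}}\Delta_i$; crucially, this value no longer depends on the particular random cluster assignment. Averaging over $R$ reclustering rounds therefore returns the same quantity, so the server iterate simplifies to $x^t = x^{t-1} + \tfrac{\eta}{c}\sum_{i\in\mathcal{S}}\Delta_i$, matching the FedAvg update written in Section \ref{Sec: Problem} after identifying the global learning rate as $\eta/c$ (or, equivalently, by taking "averaging" over clusters to mean the unnormalized cluster sum, depending on which scaling convention one adopts for \texttt{RobustAggr}).

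I do not expect any real obstacle here: the lemma is essentially a bookkeeping observation that (i) secure cluster aggregation is unbiased and recovers the cluster sum exactly, (ii) sums over a partition telescope into the full sum over $\mathcal{S}$, and (iii) in the absence of any filtering the $R$ reclustering rounds become redundant since the aggregated quantity is invariant to the choice of partition. The only care required is to align the multiplicative scaling between SHARE's $\eta/c$ and FedAvg's $\eta$ by absorbing $1/c$ into the effective learning rate, after which the two trajectories are identical round by round.
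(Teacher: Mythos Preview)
Your proposal is correct and follows essentially the same route as the paper: collapse the double sum over clusters into a single sum over all clients using the partition property, observe that the result is therefore independent of the random clustering, and conclude that the $R$ reclustering rounds are redundant. The only cosmetic difference is that the paper treats ``averaging'' at the robust-aggregation step as the unnormalized sum $g^r=\sum_{j\in[c]}g_j^r$ (consistent with its FedAvg update $x^t=x^{t-1}+\eta\sum_{i\in[n]}\Delta_i^K$ in Section~\ref{Sec: Problem}), so no $1/c$ factor appears; you anticipated exactly this convention issue and resolved it correctly.
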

\begin{proof}
In each re-clustering round, the update with benign averaging becomes $g^r=\sum_{i\in[c]}\sum_{j\in \M_i}\Delta_j^K=\sum_{l\in[n]}\Delta_l^K$ where $n$ is the total number of clients and $c=\frac{n}{m}$ is the total number of clusters, as this update is independent of the random cluster division, the global update at round $t$ becomes $x^t=x^{t-1}+\eta\sum_{l\in[n]}\Delta_l^K$ which is identical to federated averaging.
\end{proof} 
\begin{lemma}
In presence of robust aggregation, Algorithm \ref{algo:Share} is robust to $q=\frac{q_0}{m}$ where $q_0$ is the tolerance limit of the robust aggregation oracle followed and $m$ is the cluster size.
\end{lemma}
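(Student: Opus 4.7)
The plan is to trace through what the robust aggregation oracle actually sees inside Algorithm~\ref{algo:Share} and to count how many of its inputs can be adversarially controlled. In each reclustering round, the oracle is handed $c=n/m$ cluster averages $g_j^r$ rather than individual client updates, so the relevant quantity is how many of these $c$ cluster-level inputs can be corrupt in the worst-case placement of the $q$ Byzantine clients.

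First I would observe that intra-cluster secure aggregation is simply a sum of masked updates, so as soon as a single Byzantine client sits inside a cluster that client can offset the cluster average by an arbitrary vector (e.g., by inflating its own masked contribution). Consequently, any cluster containing at least one Byzantine client must be treated as a fully corrupt input to \texttt{RobustAggr}. In the worst case the adversary's $q$ Byzantine clients land in $q$ distinct clusters, producing exactly $q$ corrupt cluster averages out of $c$; any other placement yields strictly fewer corrupt inputs, so this is the binding case.

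Next I would interpret the tolerance parameter $q_0$ of the underlying robust aggregation oracle in the natural proportional sense used throughout the paper: for trimmed mean, median, and Krum, the trim/threshold parameter is configured as a fixed fraction of the number of inputs. Applied to $n$ individual updates the oracle tolerates $q_0$ faulty inputs, so applied instead to $c=n/m$ cluster averages it tolerates $q_0\cdot(c/n)=q_0/m$ faulty inputs. Combining this with the previous count yields the sufficient condition $q\le q_0/m$ for the oracle's guarantees to transfer from the flat setting to SHARE, which is exactly the claimed bound. The concrete instantiation in Section~\ref{Sec: Experiments} (trimmed mean with $q_0 = 40$ out of $n=60$, $m=3$, giving tolerance $40/3$) is then an immediate corollary.

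The main obstacle I expect is simply being precise about what ``tolerance limit'' means for a generic robust aggregator, since different schemes parameterize their robustness slightly differently. I would handle this by stating the interpretation explicitly as an assumption on how $q_0$ scales with input count, and then verifying for each of the concrete oracles (trimmed mean, Krum, median) that the claimed $q_0/m$ bound follows. The worst-case overcount from assuming each Byzantine client occupies a distinct cluster is pessimistic; I would only flag this, since it is what motivates the reclustering discussion and variance analysis in Appendix~E rather than needing to be tightened here.
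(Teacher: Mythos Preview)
Your proposal is correct and follows essentially the same worst-case argument as the paper: a single Byzantine client corrupts its entire cluster, so in the worst allocation the $q$ adversaries occupy $q$ distinct clusters, and the oracle's tolerance must absorb those $q$ corrupt inputs. The paper's own proof is the same three-line sketch, though it silently uses the proportional reading of $q_0$ that you take care to make explicit (and that is confirmed by the trimmed-mean example in Section~\ref{Sec: Experiments}); your added precision about how $q_0$ scales from $n$ inputs down to $n/m$ inputs is exactly the step the paper leaves implicit.
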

\begin{proof}
We consider the worst case scenario of each malicious client being in different clusters, hence spreading the attack to the maximum possible number of clients. Although randomization beats this and might offer better clusters in multiple random rounds, there might still exist such attack favorable rounds. Allowing for this worst case sets the threshold to $q=\frac{q_0}{m}$ if the original robustness oracle has a threshold of $q_0$.
\end{proof}

 \subsection{Distance based robust aggregation}
 \begin{theorem}
Consider a function F(x) satisfying Assumptions \ref{minima},\ref{taylor} assume a robust aggregation scheme that picks up $b$ updates and satisfies Assumption \ref{robustaggr}, further, assume (G,B)-Bounded gradient dissimilarity, $\sigma_g^2$ variance in client updates and $\sigma^2$ variance in gradient estimation, there exists $\eta,\eta_l$ such that output of Algorithm \ref{algo:Share} after T rounds, $x^T$, satisfies,
\begin{align*}
    \mathbb{E}\left[\|\nabla F(x^T)\|^2\right]\leq\mathcal{O}\left(\frac{LM\sqrt{F}}{\sqrt{TKn}}+\frac{F^{2/3}(LG^2)^{1/3}}{(T+1)^{2/3}}+\frac{B^2LF}{T}+2L^2V_2+\frac{\sigma_g^2}{bm}\left(\frac{n-q-bm}{R(n-q)-1}\right)\right)
\end{align*}
where $M^2:=\sigma^2(1+\frac{n}{\eta^2})$ and $F:=F(x^0)-F(x^*)$
\end{theorem}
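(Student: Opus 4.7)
The plan is to follow the standard SCAFFOLD-style smoothness-plus-descent argument (as in Karimireddy et al.\ 2020 and the Zeno++ paper), but augmented with two extra error terms: the bias introduced by \texttt{RobustAggr} (controlled through Assumption~\ref{robustaggr}) and a variance term coming from random reclustering. First I would apply L-smoothness (Assumption~\ref{taylor}) to get
\begin{align*}
\mathbb{E}[F(x^{t+1})] - F(x^t) \leq \langle \nabla F(x^t), \mathbb{E}[x^{t+1} - x^t]\rangle + \tfrac{L}{2}\,\mathbb{E}\|x^{t+1}-x^t\|^2 ,
\end{align*}
then expand the SHARE update $x^{t+1}-x^{t}=\eta\,\tfrac{1}{R}\sum_{r}g^{r}$ by the telescoping decomposition
$g^{r} = \underbrace{\tfrac{1}{bm}\sum_{i\in \mathcal{C}_t^{r}} \Delta_{i}^{K}}_{\text{cluster-mean of surviving benign clients}} + \underbrace{(g^{r} - \text{that mean})}_{\text{\texttt{RobustAggr} bias} \leq \mathcal{O}(V_2)} $, where $\mathcal{C}_t^{r}$ is the (random) set of benign clients that end up in the $b$ clusters selected by the robust oracle in round $r$. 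The $V_2$ bias term is squared and yields the $2L^2V_2$ contribution directly.

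Next I would handle the ``ideal mean'' part. Inside that piece, the classical federated-learning decomposition splits it further as the true full gradient at $x^t$ plus three terms: (i)~the stochastic SGD noise across $K$ local steps (giving the $\sigma^{2}/(TKn)$-style term through the usual $M^{2}:=\sigma^{2}(1+n/\eta^{2})$ bookkeeping), (ii)~client drift over $K$ local steps, controlled by $(G,B)$-bounded dissimilarity and producing the $F^{2/3}(LG^{2})^{1/3}/(T+1)^{2/3}$ and $B^{2}LF/T$ terms after optimizing $\eta_{l}$, and (iii)~the discrepancy between the cluster-average over the \emph{random} sample $\mathcal{C}_t^{r}$ and the benign population mean $\mu$. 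Terms (i)--(ii) are almost verbatim from the SCAFFOLD analysis, so the work there is essentially citation plus bookkeeping to thread our extra $1/R$ averaging through the constants.

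The term (iii) is the genuinely new piece, and I expect it to be the main obstacle. The key observation is that, conditional on the honest clients, each reclustering round draws $bm$ clients uniformly (without replacement) from the $n-q$ benign workers, and the $R$ rounds are (conditionally) exchangeable. This is a finite-population sampling problem, so the variance of the averaged sample mean $\tfrac{1}{R}\sum_{r}\tfrac{1}{bm}\sum_{i\in\mathcal{C}_t^{r}}\Delta_{i}^{K}$ around $\mu$ equals $\tfrac{\sigma_g^{2}}{bm}\cdot\tfrac{n-q-bm}{R(n-q)-1}$ by the classical hypergeometric/finite-population correction (Definition~\ref{sigmag} gives the per-client $\sigma_g^{2}$ bound). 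Putting the three pieces together produces, inside $\mathbb{E}\|x^{t+1}-x^t\|^2$, the sum $\sigma^{2}/(bmK)+\text{drift}+V_{2}+\tfrac{\sigma_g^{2}}{bm}\cdot\tfrac{n-q-bm}{R(n-q)-1}$.

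Finally I would rearrange the smoothness inequality to isolate $\|\nabla F(x^{t})\|^{2}$, sum/telescope over $t=0,\dots,T-1$, divide by $T$, and tune $\eta$ and $\eta_{l}$ as in Karimireddy et al.\ (taking $\eta_{l}\eta K\propto \sqrt{n/(TK\sigma^{2})}$ balanced against the drift and dissimilarity terms) to get the stated four classical terms plus the $2L^{2}V_{2}$ bias and the reclustering-variance term. I would defer the tedious bookkeeping (constants absorbed into the $\mathcal{O}(\cdot)$ and the precise $\eta,\eta_l$ rules) to the appendix, since it mirrors known arguments; the only conceptually novel computation is the finite-population variance for the reclustering step above.
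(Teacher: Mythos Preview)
Your proposal is correct in substance and identifies the two genuinely new ingredients exactly as the paper does: the $\mathcal{O}(V_2)$ bias from Assumption~\ref{robustaggr} and the finite-population (without-replacement) sampling variance $\tfrac{\sigma_g^2}{bm}\cdot\tfrac{n-q-bm}{R(n-q)-1}$ for the reclustering step. Your derivation of the latter via the hypergeometric correction matches the paper's (they cite Rice, Chapter~7, Theorem~B, and phrase it as drawing $Rbm$ from $R(n-q)$ copies).

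Where you diverge is in how you couple these to SCAFFOLD. You run the full descent-lemma route: start from $\mathbb{E}[F(x^{t+1})]-F(x^t)\le\langle\nabla F(x^t),\mathbb{E}[x^{t+1}-x^t]\rangle+\tfrac{L}{2}\mathbb{E}\|x^{t+1}-x^t\|^2$, decompose the SHARE update inside the inner product and the squared norm, telescope, and retune $\eta,\eta_l$. The paper instead takes a shorter black-box reduction: it bounds the iterate-level deviation $\|x^t-\mu_t\|$ between SHARE and the ``ideal'' benign-mean iterate, converts this to a gradient deviation via $L$-smoothness,
\[
\mathbb{E}\|\nabla F(x^t)\|^2 \;\le\; 2L^2\bigl(\mathcal{O}(V_2)+\tfrac{\sigma_g^2}{bm}\tfrac{n-q-bm}{R(n-q)-1}\bigr) \;+\; 2\,\mathbb{E}\|\nabla F(\mu_t)\|^2,
\]
and then simply cites \citet{karimireddy2020scaffold} for the bound on $\mathbb{E}\|\nabla F(\mu_t)\|^2$, which supplies the first three terms verbatim. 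Your route redoes the SCAFFOLD bookkeeping and is therefore longer but self-contained; the paper's route is a two-line perturbation argument that avoids re-deriving drift/noise bounds but depends more heavily on the cited result. Either path yields the stated bound; the paper's is the quicker one if you are willing to invoke SCAFFOLD wholesale.
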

 \begin{proof} Firstly, we bound the distance between global model update from Algorithm \ref{algo:Share} and  expected benign mean model update in each global iteration. In particular, let the expected benign mean model update be denoted by $\mu_t$ and global model update in each iteration is given by $\frac{1}{R}\sum_r\texttt{RobustAggr}(\{g_i^r\}_{i\in[c]})$. We determine an upper bound on $\|\mathbb{E}[\frac{1}{R}\sum_r\texttt{RobustAggr}(\{g_i^r\}_{i\in[c]})]-\mu_t\|$. This is illustrated below
 \begin{align*}
     \|\mathbb{E}[\frac{1}{R}\sum_r\texttt{RobustAggr}(\{g_i^r\}_{i\in[c]})]-\mu_t\|^2&\leq\|\mathbb{E}[\frac{1}{R}(\sum_r\texttt{RobustAggr}(\{g_i^r\}_{i\in[m]})-\sum_{r,i\in B}g_i^r)]\\&+\mathbb{E}[\frac{1}{R}\sum_{r,i\in B}g_i^r]-\mu_t\|^2\\
     &\leq\mathcal{O}(V_2)+2\|\mathbb{E}[\frac{1}{R}\sum_{r,i\in B}g_i^r]-\mu_t\|^2\\
     &\leq\mathcal{O}(V_2)+2\mathbb{E}\|\texttt{Resample}(\{\Delta_i^K\}_{i\in C})-\mu_t\|^2\\
     &\leq\mathcal{O}(V_2)+\frac{\sigma_g^2}{Rbm}(1-\frac{R(bm)-1}{R(n-q)-1})\\
     &\leq\mathcal{O}(V_2)+\frac{\sigma_g^2}{bm}(\frac{n-q-bm}{R(n-q)-1})
 \end{align*}
 Where $B$ denotes indices of benign clusters (clusters with uncorrupted device updates. Mathematically, let $\mathcal{C}_t$ denote set of benign clients among all $n$ clients. $\{B:i\in [c]\text{ such that } \forall k\in[m],\Delta_k\in\mathcal{M}_i,\Delta_k\in\mathcal{C}_t\}$ ), $r\leq R$ as mentioned in the text denote reclustering rounds. The second inequality follows from Assumption \ref{robustaggr}. Since each reclustering round randomly groups clients together, the set $\{\Delta_k: \Delta_k\in\mathcal{M}_i,i\in B\}$ is a random resample of $bm$ benign client updates from the available $n-q$, where $b$ is the number of updates available after filteration through robust aggregation. With $R$ resampling rounds, this is equivalent to resampling $Rbm$ updates from $R(n-q)$ benign updates. Following \citet{rice2006mathematical}(Chapter 7, Theorem B), we obtain the scaled down variance bound. 
 
 Using L-smoothness of $F(x)$,
 \begin{align*}
     \mathbb{E}[\|\nabla F(x^t)\|^2]&\leq2\mathbb{E}[\|\nabla F(x^t)-\nabla F(\mu_t)\|^2]+2\mathbb{E}[\|F(\mu_t)\|^2]\\
     &\leq2L^2(\mathcal{O}(V_2)+\frac{\sigma_g^2}{bm}(\frac{n-q-bm}{R(n-q)-1}))+2\mathbb{E}[\|F(\mu_t)\|^2]
 \end{align*}
 The rest follows a similar approach as \citet{karimireddy2020scaffold} hence we get
 \begin{align*}
    \mathbb{E}\left[\|\nabla F(x^T)\|^2\right]\leq\mathcal{O}\left(\frac{LM\sqrt{F}}{\sqrt{TKn}}+\frac{F^{2/3}(LG^2)^{1/3}}{(T+1)^{2/3}}+\frac{B^2LF}{T}+2L^2V_2+\frac{\sigma_g^2}{bm}\left(\frac{n-q-bm}{R(n-q)-1}\right)\right)
\end{align*}
where $M^2:=\sigma^2(1+\frac{n}{\eta^2})$ and $F:=F(x^0)-F(x^*)$.
 \end{proof}

 \subsection{Zeno++ as robust aggregation}
 We first illustrate a modified Zeno++ algorithm and adapt it to Federated Learning setting from its original asynchronous SGD paradigm. Firstly, we define a score that helps filter out updates if they fall below a threshold. Intuitively, the score denotes trustworthiness of a clustered update.
 \begin{definition}[Approximated model update score] Denote $f_s(x)=\frac{1}{n_s}\sum_i^{n_s}f(x;z_i)$, where $z_j's$ are drawn independent and identically from $\mathcal{D}_s\neq\mathcal{D}_i,\forall i\in[n]$ and $n_s$ is the batch size of $f_s(\cdot)$, for a clustered client update $g$, model parameter $x$, global learning rate $\eta$ and constant weight $\rho>0$, we define model update score as
 \begin{align*}
     Score_{\eta,\rho}\approx-\eta\langle\nabla f_s(x),g\rangle-\rho\|g\|^2
 \end{align*}
 where $x$ is the current model available on the server.
 \end{definition}
Using this approximated model update score, we set hard thresholding parameterized by $\epsilon$ to filter client cluster updates. Algorithm \ref{algo:ZenoShare} illustrates SHARE framework with Zeno++ as robust aggregation. We analyze the convergence of Algorithm \ref{algo:ZenoShare} in the following theorem.
 \begin{algorithm}
\caption{SHARE (Secure Hierarchical Robust Aggregation) with Zeno++ defense}\label{algo:ZenoShare}
\begin{algorithmic}[1]
\item\underline{\textbf{Server:}}\\
\FOR{$t = 0, \; \dots \; , T-1$}
\FOR{$r=1,\;\dots\;,R$}
\STATE Assign clients to clusters $\mathcal{S}=\M_1\cup\;\dots\; \M_i\;\dots\;\cup \M_c$ with $|\M_i|=|\M_j|\forall i,j\in[c]$
\STATE Compute secure average $g_j^r\gets\texttt{SecureAggr}(\{\Delta_i\}_{i\in \M_j})=\sum_{i\in\mathcal{M}_j} u_i,\forall j\in[c]$
    \STATE Randomly sample $z_j\sim S,\forall j\in [n_s]$ to compute $f_s$
     \FOR{$i=1,\;\dots\;,c$}
     \IF{score($g_i^r,x^{t-1}\geq-\eta\epsilon$)}
         \STATE $g^r\gets g^r+g_i^r$,
     \ENDIF
     \ENDFOR
\ENDFOR
     
\IF {stopping criteria met}
        \STATE break
    \ENDIF
\STATE Push $x^t=x^{t-1}+\eta\frac{1}{R}\sum_rg^r$ to the clients    
\ENDFOR\\
\underline{\textbf{Client:}}
    \FOR{each client $i \in \mathcal{S}$ (if honest) \textbf{in parallel}}
        \STATE $x^{t}_{i, 0} \gets x^t$
        \FOR{$k = 0, \; \dots \; , K-1$}
            \STATE Compute an unbiased estimate $g^{t}_{i,k}$ of $\nabla f_i(x^{t}_{i,k})$
            \STATE $x^{t}_{i,k+1} \gets$ \texttt{ClientOptimize}$(x^{t}_{i,k}, g^{t}_{i,k}, \eta_{l}, k)$
        \ENDFOR
        \STATE $\Delta_i =\frac{n_i}{n}( x^{t}_{i,K} - x^t)$
        \STATE Push $\Delta_i$ to the assigned clusters using secure aggregation\\ 
    \ENDFOR\\
\RETURN  $x^T$
\end{algorithmic} 
\end{algorithm}
\begin{theorem}
Consider L-smooth and potentially non-convex functions $F(x)$ and $f_s(x)$, Assume validation set is close to training set, implying a bounded variance given by $\mathbb{E}[\|\nabla f_s(x)-\nabla F(x)\|^2]\leq V_1,\forall x$, Assume $\|f_s(x)\|^2\leq V_3,\forall x$. Further assuming bounded gradient dissimilarity as stated in \ref{bgd},variance between client updates of $\sigma_g^2$ and variance in gradient estimation at each client be $\sigma$, with global and local learning rates of $\eta\leq\frac{1}{2L}$ and $\rho\geq\frac{\alpha\sqrt{\eta}}{6K^2\eta_l^2B^2}+\eta$, after T global updates, let $F:=F(x^0)-F(x^*)$, Algorithm \ref{algo:ZenoShare} with Zeno++ as robust aggregation converges at a critical point:
\begin{align*}
    \frac{\mathbb{E}[\sum_{t\in[T]}\|\nabla F(x_{t-1})\|^2]}{T}\leq\frac{\mathbb{E}[F]}{\alpha\sqrt{\eta}T}+\frac{\sqrt{\eta}}{\alpha}\mathcal{O}\left(\frac{\sigma_g^2}{m}\left(\frac{n-q-m}{R(n-q)-1}\right)+G^2+\sigma^2+V_1+V_3+\epsilon\right)
\end{align*}
\end{theorem}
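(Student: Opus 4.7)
\textbf{Proof plan for Theorem \ref{Theorem: Zeno}.} The plan is to run the standard ``one-step descent via $L$-smoothness, then telescope'' template, but with all the extra work concentrated on controlling the accepted aggregate $\frac{1}{R}\sum_r g^r$ in Algorithm~\ref{algo:ZenoShare}, where $g^r$ is the sum over clusters whose Zeno++ score exceeds $-\eta\epsilon$. First I would apply $L$-smoothness to $F$ along the server step $x^t = x^{t-1} + \eta \frac{1}{R}\sum_r g^r$ to get
\begin{equation*}
F(x^t) - F(x^{t-1}) \;\leq\; \eta\Bigl\langle \nabla F(x^{t-1}),\,\tfrac{1}{R}\sum_r g^r\Bigr\rangle + \tfrac{L\eta^2}{2}\bigl\|\tfrac{1}{R}\sum_r g^r\bigr\|^2.
\end{equation*}
I would then split $\nabla F(x^{t-1}) = \nabla f_s(x^{t-1}) + (\nabla F(x^{t-1}) - \nabla f_s(x^{t-1}))$ and treat the two pieces separately: the $\nabla f_s$ term is controlled through the Zeno++ filter, and the residual is absorbed using Assumption~\ref{valassump} (bounded $V_1$) via Young's inequality.

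Next, for each accepted cluster $g_i^r$ the filtering rule gives $-\eta\langle \nabla f_s(x^{t-1}), g_i^r\rangle - \rho\|g_i^r\|^2 \geq -\eta\epsilon$, equivalently $\eta\langle \nabla f_s(x^{t-1}), g_i^r\rangle \leq \eta\epsilon - \rho\|g_i^r\|^2$. Summing over accepted clusters and dividing by $R$ converts the first inner product on the RHS of the descent inequality into $\eta\epsilon$ minus a quadratic penalty $\rho\|g^r\|^2/R$. Combined with the $\tfrac{L\eta^2}{2}\|\cdot\|^2$ term, the hypothesis $\rho \geq \eta$ together with $\eta \leq 1/(2L)$ ensures the quadratic coefficient is nonpositive, so those terms can be discarded (or used as slack). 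This is the standard Zeno++ manipulation and is where the $\rho$ lower bound matters.

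The piece I expect to be delicate is linking $\frac{1}{R}\sum_r g^r$ back to $\nabla F(x^{t-1})$ so that a $\|\nabla F(x^{t-1})\|^2$ term actually appears on the LHS after rearrangement. My plan is to compare $g^r$ to the expected benign cluster mean $\mu_t$ exactly as in the proof of Theorem~\ref{Theorem: distaggr}: the randomness of the reclustering turns the collection of benign accepted updates into a uniform resample of size $\approx bm$ from the $n-q$ benign updates across $R$ rounds, so the Rice Chapter~7 Theorem~B variance formula yields
\begin{equation*}
\mathbb{E}\bigl\|\tfrac{1}{R}\sum_r g^r - \mu_t\bigr\|^2 \;\leq\; \tfrac{\sigma_g^2}{m}\Bigl(\tfrac{n-q-m}{R(n-q)-1}\Bigr) + \text{(Zeno filtering slack }\leq \epsilon\text{)}.
\end{equation*}
Then I translate $\mu_t$, which is an average of $K$-step local drifts, to $\nabla F(x^{t-1})$ using $L$-smoothness, bounded gradient dissimilarity (Def.~\ref{bgd}) for the bias, and $\sigma^2$ for the stochastic gradient variance; the $K^2\eta_l^2 B^2$ factor in the lower bound on $\rho$ is precisely what absorbs the residual quadratic cost of $K$ inner SGD steps.

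Finally I would choose $\alpha$ so that after rearranging the one-step descent into the form $\alpha\sqrt{\eta}\,\|\nabla F(x^{t-1})\|^2 \leq F(x^{t-1}) - F(x^t) + \sqrt{\eta}\cdot(\text{error terms})$, summing $t=1,\dots,T$, telescoping $F(x^0) - F(x^T) \leq F$, and dividing by $\alpha\sqrt{\eta}T$ give exactly the claimed bound. The error terms collected along the way are the reclustering-reduced variance $\tfrac{\sigma_g^2}{m}(\tfrac{n-q-m}{R(n-q)-1})$, the dissimilarity $G^2$, the SGD variance $\sigma^2$, the validation mismatch $V_1$, the uniform bound $V_3$ on $\|f_s\|^2$ (which controls the $\|\nabla f_s\|$-Cauchy-Schwarz cross term after another Young step), and the filter threshold $\epsilon$. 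The main obstacle is the simultaneous juggling of the Zeno++ score constraint, the $\rho$ penalty, the $K$-step local-update bias, and the reclustering variance bound so that all quadratic terms in $\|g^r\|$ and $\|\nabla F\|$ are signed correctly; once the constants $\eta \leq 1/(2L)$ and the stated lower bound on $\rho$ are used, everything lines up.
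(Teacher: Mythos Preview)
Your overall template---one-step $L$-smoothness descent, split $\nabla F = \nabla f_s + (\nabla F-\nabla f_s)$, feed the $\nabla f_s$ piece into the Zeno++ score inequality, control the residual with Assumption~\ref{valassump}, bound the $K$-step local drift via the standard recursion and Definition~\ref{bgd}, insert the Rice-type reclustering variance, then telescope---is exactly the route the paper takes, and the list of error terms you collect matches.

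There is, however, one genuine gap. You say that once $-\rho\|g^r\|^2$ from the Zeno++ score is combined with $\tfrac{L\eta^2}{2}\|g^r\|^2$ from smoothness, the net quadratic coefficient is nonpositive and ``those terms can be discarded (or used as slack).'' If you actually discard them, you lose the only handle on $\|\nabla F(x^{t-1})\|^2$: the Zeno++ inequality has already consumed the inner product $\langle\nabla f_s, g^r\rangle$, and the cross term $\langle\nabla F-\nabla f_s, g^r\rangle$ is turned into $V_1$ plus more $\|g^r\|^2$ by Young. Nothing left on the right-hand side carries $\|\nabla F\|^2$. In the paper the term $(\eta-\rho)\mathbb{E}\|g^r\|^2$ is \emph{kept}, and it is precisely the source of the gradient norm: the paper bounds $\mathbb{E}\|g^r\|^2$ by $V_3+2\eta\epsilon+\mathbb{E}\|\tilde g^r\|^2$, splits $\mathbb{E}\|\tilde g^r\|^2$ into the reclustering variance $\tfrac{\sigma_g^2}{m}\bigl(\tfrac{n-q-m}{R(n-q)-1}\bigr)$ plus the local drift $\mathbb{E}\|x_i^K-x^{t-1}\|^2$, and the drift recursion together with BGD produces a $6K^2\eta_l^2B^2\|\nabla F(x^{t-1})\|^2$ contribution. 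Multiplying that piece by $(\eta-\rho)\leq -\alpha\sqrt{\eta}/(6K^2\eta_l^2B^2)$ is what yields $-\alpha\sqrt{\eta}\|\nabla F(x^{t-1})\|^2$ in the descent inequality. So the lower bound on $\rho$ is not there to make the quadratic disposable; it is there to make the $\|\nabla F\|^2$ coefficient inside the quadratic sufficiently negative. Your later remark that ``the $K^2\eta_l^2B^2$ factor in the lower bound on $\rho$ is precisely what absorbs the residual quadratic cost'' is in fact the correct intuition and contradicts the earlier ``discard'' step---once you keep the quadratic and route $\|g^r\|^2$ through the drift bound rather than only through $\|g^r-\mu_t\|^2$, your plan and the paper's coincide.
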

 \begin{proof}
 Since for any cluster update $g$ that passes the test of Zeno++, it follows that
 \begin{align*}
     -\langle\nabla f_s(x_t),\eta g\rangle-\rho\|g\|^2\geq-\eta\epsilon.
 \end{align*}
 Thus, we have
 \begin{align*}
     &\langle \nabla F(x_{t-1}),\eta\mathbb{E}[g^r]\rangle\\
     &\leq\langle \nabla F(x_{t-1})-\nabla f_s(x_t),\eta\mathbb{E} [g^r]\rangle-\rho\mathbb{E}[\|g^r\|^2]+\eta\epsilon\\
     &\leq\frac{\eta}{2}\|\nabla F(x_{t-1})-\nabla f_s(x_t)\|^2+\frac{\eta}{2}\|\mathbb{E}[g^r]\|^2-\rho\mathbb{E}[\|g^r\|^2]+\eta\epsilon\\
     &\leq\frac{\eta}{2}\|\nabla F(x_{t-1})-\nabla f_s(x_t)\|^2+(\frac{\eta}{2}-\rho)\mathbb{E}[\|g^r\|^2]+\eta\epsilon\\
     &\leq\frac{\eta}{2}\|\nabla F(x_{t-1})-\nabla F(x_t)+\nabla F(x_t)-\nabla f_s(x_t)\|^2+(\frac{\eta}{2}-\rho)\mathbb{E}[\|g^r\|^2]+\eta\epsilon\\
     &\leq\eta\|\nabla F(x_{t-1})-\nabla F(x_t)\|^2+\eta\|\nabla F(x_t)-\nabla f_s(x_t)\|^2+(\frac{\eta}{2}-\rho)\mathbb{E}[\|g^r\|^2]+\eta\epsilon\\
     &\leq\eta\|\nabla F(x_{t-1})-\nabla F(x_t)\|^2+\eta V_1+(\frac{\eta}{2}-\rho)\mathbb{E}[\|g^r\|^2]+\eta\epsilon\\
     &\leq\eta^3L^2\mathbb{E}[\|g^r\|^2]+\eta V_1+(\frac{\eta}{2}-\rho)\mathbb{E}[\|g^r\|^2]+\eta\epsilon\\
 \end{align*}
 Where $g^r$ is the model update in reclustering round $r\leq R$. From $L$ smoothness, we have
 \begin{align*}
     \|\nabla F(x_{t-1})-\nabla F(x_t)\|^2\leq L^2\|x_{t-1}-x_t\|^2\leq L^2\eta^2\mathbb{E}[\|g^r\|^2]
 \end{align*}
 Using smoothness again, considering a global step size as $\eta L\leq \frac{1}{2}$we get
 \begin{align}
     \mathbb{E}&[F(x_t)]\\
     &\leq F(x_{t-1})+\langle\nabla F(x_{t-1}),\eta\mathbb{E}[g^r] \rangle+\frac{L\eta^2}{2}\mathbb{E}[\|g^r\|^2]\\
     &\leq F(x_{t-1})+(\eta^3L^2+\frac{\eta}{2}-\rho+\frac{L\eta^2}{2})\mathbb{E}[\|g^r\|^2]+\eta V_1+\eta\epsilon\\
     &\leq F(x_{t-1})+(\eta-\rho)\mathbb{E}[\|g^r\|^2]+\eta V_1+\eta\epsilon\label{eq: halfderiv}
 \end{align}
 Now we will bound the term $\mathbb{E}[\|g^r\|^2]$. Further, $\mathbb{E}[\|g^r\|^2]\leq 2(\frac{V_3}{2}+\eta\epsilon)+\mathbb{E}\|\tilde{g}^r\|^2$ where $\|\nabla f_s(x)\|^2\leq V_3$ and $\tilde{g}^r$ is benign average obtained through sampling of benign clients
 \begin{align}
     \mathbb{E}\|\tilde{g}^r\|^2\leq\mathbb{E}\|x_i^K-x_{t-1}\|^2+\frac{\sigma_g^2}{m}(\frac{n-q-m}{R(n-q)-1})\label{eq: boundingg}
 \end{align}
 Where $x_i^K$ corresponds to model parameters after K rounds of SGD on $ith$ device, $\sigma_g^2$ corresponds to variance between device updates. Since sampling successive $g_i's$ can be seen as sampling with replacement, and at least one cluster is selected each time, this has a maximum variance of single cluster selection case. ($m$ is the cluster size, $n$ is the total number of devices). The mean is equal to client drift, which can be bounded as shown below (for notational brevity, present global model $x_{t-1}$ is denoted as $x$). Let us assume gradients at each data point $g_i(x_i^{k-1})=\nabla f_i(x_i^{k-1})+\text{error}$, where error has mean 0 and $\sigma$ standard deviation as stated in Assumption \ref{sigma}. For $k\leq K$ steps of local SGD, we get
 \begin{align*}
     \mathbb{E}\|x_i^k-x\|^2&=\mathbb{E}\|x_i^{k-1}-x-\eta_lg_i(x_i^{k-1})\|^2\\
     &\leq\mathbb{E}\|x_i^{k-1}-x-\eta_l\nabla f_i(x_i^{k-1})\|^2+\eta_l^2\sigma^2\\
     &\leq(1+\frac{1}{K-1})\mathbb{E}\|x_i^{k-1}-x\|^2+K\eta_l^2\|\nabla f_i(x_i^{k-1})\|^2+\eta_l^2\sigma^2\\
     &\leq(1+\frac{1}{K-1})\mathbb{E}\|x_i^{k-1}-x\|^2+2K\eta_l^2\|\nabla f_i(x_i^{k-1})-\nabla f_i(x)\|^2+2K\eta_l^2\|\nabla f_i(x)\|^2+\eta_l^2\sigma^2\\
     &\leq(1+\frac{1}{K-1}+2K\eta_l^2L^2)\mathbb{E}\|x_i^{k-1}-x\|^2+2K\eta_l^2\|\nabla f_i(x)\|^2+\eta_l^2\sigma^2
 \end{align*}
 Where the first inequality uses mean and variance in gradient estimation, the second one follows from relaxed triangle inequality as stated in \citet{karimireddy2020scaffold}(Lemma 3). Taking appropriate local step size $\eta_l^2\leq\frac{1}{2L^2K(K-1)}$ and telescoping the sum, we get
 \begin{align*}
     \mathbb{E}\|x_i^k-x\|^2&\leq\sum_{\tau=1}^{k-1}(2K\eta^2_l\|\nabla f_i(x)\|^2+\eta^2_l\sigma^2)(1+\frac{2}{K-1})^\tau\\
     &\leq(2K\eta^2_l\|\nabla f_i(x)\|^2+\eta^2_l\sigma^2)\sum_\tau(1+\frac{2}{K-1})^\tau\\
     &\leq(2K\eta^2_l\|\nabla f_i(x)\|^2+\eta^2_l\sigma^2)3K
 \end{align*}
 The last inequality follows from the fact that $\tau<K$ and $(1+x/n)^n<\exp(x)$. Substituting this back into (\ref{eq: boundingg}) and averaging over all $i's$ (client devices), we get
 \begin{align*}
     \mathbb{E}\|g^r\|^2&\leq\frac{1}{N}6K^2\eta^2_l\sum_i\|\nabla f_i(x)\|^2+3K\eta^2_l\sigma^2+\frac{\sigma_g^2}{m}(\frac{n-q-m}{R(n-q)-1})+V_3+2\eta\epsilon\\
     &\leq 6K^2\eta_l^2G^2+6K^2\eta_l^2B^2\|\nabla F(x)\|^2+3K\eta_l^2\sigma^2+\frac{\sigma_g^2}{m}(\frac{n-q-m}{R(n-q)-1})+V_3+2\epsilon\eta\\
 \end{align*}
 Where $\frac{1}{N}\sum_i\|\nabla f_i(x)\|^2\leq G^2+B^2\|\nabla F(x)\|^2$ follows from bounded gradient assumption. Combining this with (\ref{eq: halfderiv}), we get
 \begin{align*}
     \mathbb{E}&[F(x_t)]\leq F(x_{t-1})\\&+(\eta-\rho)( 6K^2\eta_l^2G^2+6K^2\eta_l^2B^2\|\nabla F(x)\|^2+3K\eta_l^2\sigma^2+\frac{\sigma_g^2}{m}(\frac{n-q-m}{R(n-q)-1}))+\eta V_1+V_3+3\eta\epsilon
 \end{align*}
 Taking $\rho\geq\frac{\alpha\sqrt{\eta}}{6K^2\eta_l^2B^2}+\eta$, we have
 \begin{align*}
     \|\nabla F(x_{t-1})\|^2\leq\frac{\mathbb{E}(F(x_{t-1})-F(x_t)))}{\alpha\sqrt{\eta}}+\frac{\sqrt{\eta}}{\alpha}\mathcal{O}\left(\frac{\sigma_g^2}{m}\left(\frac{n-q-m}{R(n-q)-1}\right)+G^2+\sigma^2+V_1+V_3+\epsilon\right)
 \end{align*}
 Telescoping and using expectation after $T$ global epochs, we get
 \begin{align*}
    \frac{\mathbb{E}[\sum_{t\in[T]}\|\nabla F(x_{t-1})\|^2]}{T}\leq\frac{\mathbb{E}[F]}{\alpha\sqrt{\eta}T}+\frac{\sqrt{\eta}}{\alpha}\mathcal{O}\left(\frac{\sigma_g^2}{m}\left(\frac{n-q-m}{R(n-q)-1}\right)+G^2+\sigma^2+V_1+V_3+\epsilon\right)
\end{align*}
\end{proof}
\section{Security}
To summarize, the security protocol operates in multiple rounds as is the case with any secure aggregation oracle in distributed learning. Firstly, keys are shared among every pair of clients in a cluster, this is followed by collection of masked inputs among each cluster by the server, which are then averaged within the cluster after a consistency check to make sure enough participants have participated in the round. Since model parameters are 32-bit floating points we convert them to integers and perform the masking modulo $2^{32}$.

In particular, each client masks its private update with random vectors such that the server, even if curious, does not learn anything more than the sum of updates from a client cluster. For a given cluster $\mathcal{M}_k,k\in[c]$ assume that $\Delta_i,\sum_{i\in\mathcal{M}_k}\Delta_j\in\mathbb{Z}_P$, for some P.  Consider an order on all the clients within a cluster and each pair of users $i,j (i<j)$ agree on a random vector $r_{i,j}$. If $i$ adds this to its updates ($\Delta_i$) and $j$ subtracts it from its update ($\Delta_j$), adding them would cancel and server would learn just the average but not individual updates. Hence, each client $i\in\mathcal{M}_k$ would compute $u_i=\Delta_i+\sum_{j\in\mathcal{M}_k,i<j}r_{i,j}-\sum_{j\in\mathcal{M}_k,i>j}r_{i,j}$ (mod P). If no clients drop in the computation round, it can be seen that $\sum_{i\in\mathcal{M}_k} u_i=\sum_{i\in\mathcal{M}_k}\Delta_i$
 (mod P). Further, this can be made communication efficient by coordinating a common agreement on seeds for pseudorandom generator.
 

\subsection{Communication efficiency}
 Notice that sharing a whole mask has a communication cost that increases linearly with the model size and hence prevents dealing with large models. This can be circumvented by clients agreeing on common seeds for a pseudorandom generator (PRG). PRG takes in a random seed as input and generates uniformly random numbers in $[0,P)^d$ where $d$ is the model dimension. Engaging in a key agreement after broadcasting Diffie–Hellman public keys, as stated in \cite{bonawitz2017practical} is a way to compute these shared seeds. Hence, each client belonging to a cluster $i\in\mathcal{M}_k$ would compute $u_i=\Delta_i+\sum_{j\in\mathcal{M}_k,i<j}\text{PRG}(s_{i,j})-\sum_{j\in\mathcal{M}_k,i>j}\text{PRG}(s_{i,j})_{i,j}$ (mod P), where $s_{i,j}$ is the shared seed between clients $i,j$.
 
\subsection{Handling dropped users}
Since masks are shared between clients in a cluster, dropping of a client in a round causes incorrect computation of average as the masks do not exactly cancel each other. This problem is resolved by utilizing Samir's t out of n Secret Sharing \cite{shamir1979share} to share each clients' Diffie–Hellman secret with others and hence server can retrieve masks for the dropped client. Optionally, double masking as noted in \citet{bonawitz2017practical} can be used to enhance security.
\subsection{Privacy at Server}
As noted in Section \ref{Sec: Theory}, server learns nothing more than the average of the clustered client updates. However, multiple reclustering rounds poses an additional privacy threat since multiple averages among same clients appear in clear to the server. We note that the server requires at least $R\geq \frac{n}{c}$ to identify all the updates, this can be used to tune $R,c$. Further, $R\geq \frac{n}{c}$ does not guarantee that server learns all updates since clusterings can overlap resulting in linearly dependent equations with infinite solutions.  

\subsection{Privacy from curious clients}
As mentioned in Section \ref{Sec: Theory}, at least $m-1$ malicious clients are required in a cluster to infer the update of the remaining client. In each reclustering round this probability is upper bounded by $\mathcal{O}(\frac{m(n-m)!}{(q-1-m)!})$ and hence the rest being constant, as cluster size increases, it gets harder to break a client's privacy. Further, we note that just as $n-1$ colluding curious clients can break privacy in traditional Federated Learning, $m-1$ colluding curious clients can in our approach.
\subsection{Communication costs}
\paragraph{Server:} The server communication cost is $\mathcal{O}(Rnm+Rdn)$, where $R,m,n,d$ indicate number of reclustering rounds, number of clients per cluster, total number of clients and model size respectively. Here $\mathcal{O}(Rmn)$ is associated with mediation of pairwise communication between clients in each cluster and $\mathcal{O}(Rdn)$ is for receiving masked data vectors from each user. Although reclustering increases communication costs, clustering helps reduce pairwise communications.
\paragraph{Client:} Client communication cost is $\mathcal{O}(Rm+Rd)$. Here $\mathcal{O}(Rm)$ is associated with pairwise key exchange within a cluster over all reclustering rounds and $\mathcal{O}(Rd)$ is for communicating its model to server in every reclustering round. 

Note that these are passive adversaries hence while can be curious, they honestly follow the protocol for security. Compared to the two server approach suggested in \citet{he2020secure}, our approach can handle attacks on the server, because if an adversary attacks the server(s) or can see communication channels, model updates still remain private.
\section{Additional Experiments}
We use a global learning rate $\eta=1$, local learning rate $\eta_l=0.01$, local momentum 0.9, and mini-batch size 64. We run each experiment for 200 global rounds with 2 local rounds each and report top-1 accuracy on the testing set. For all the experiments, unless specified, we use $R=10$ reclustering rounds.

For Zeno++, we randomly sample 5\% of the training data across clients with the same number of samples of each label to use as the server-side validation set as in~\citet{xie2020zeno++}. We consider batch size of 128, $\rho=0.0001,\epsilon=0.2$ as Zeno++ parameters.
 
In experiments on Shakespear, we use $\eta=1$, $\eta_l=1$, local momentum 0.9, and mini-batch size 256. We run each experiment for 100 global rounds with 2 local rounds each. For all the experiments, unless specified, we use $R=10$ reclustering rounds. Codes for the same would be made available soon.
\subsection{Impact of cluster size}
\subsubsection{Fall of Empires Attack}
 We now test the sensitivity of cluster size on Byzantine-tolerance to a stronger attack with colluding adversaries, we utilize a modified version of Fall of Empires (FoE) \cite{pmlr-v115-xie20a}. In particular, each malicious client sends a negatively scaled averaged model update across all malicious clients. We test scaling these updates by $\beta=-1,-10$. Since Zeno++ can tolerate a greater fraction of clients being Byzantine, we set $q=6$ while for trimmed mean and Krum, we set $q=3$. (Parameters for defenses remain the same as in Section \ref{Sec: Experiments} unless specified.)
 
 \begin{figure}[h]
\centering
\subcaptionbox{Zeno++, $\beta=-1,q=6$\label{zeno-g}}
    {\includegraphics[width=0.32\textwidth]{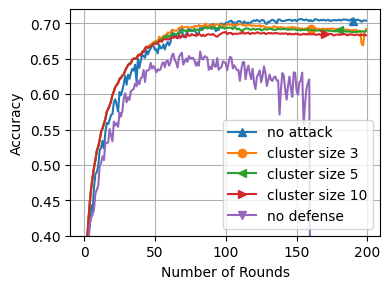}}
\subcaptionbox{Krum, $\beta=-1,q=3$\label{krum-g}}
    {\includegraphics[width=0.32\textwidth]{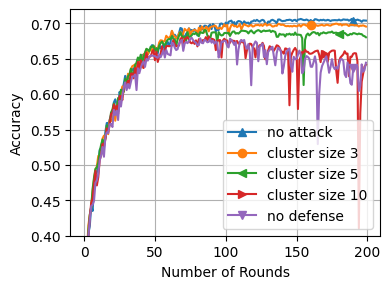}}
\subcaptionbox{Trmean, $\beta=-1,q=3$\label{trimmed-mean-g}}
    {\includegraphics[width=0.32\textwidth]{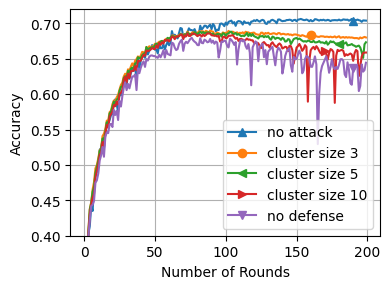}}

\subcaptionbox{Zeno++, $\beta=-10,q=6$\label{zeno-10g}}
    {\includegraphics[width=0.32\textwidth]{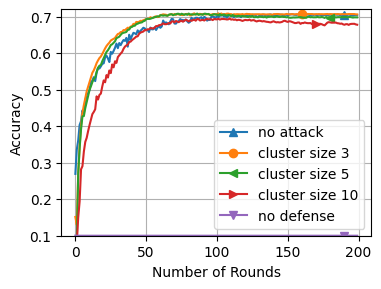}}
\subcaptionbox{Krum, $\beta=-10,q=3$\label{krum-10g}}
    {\includegraphics[width=0.32\textwidth]{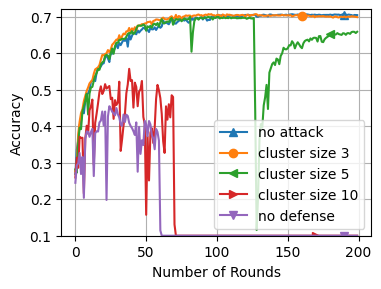}}
\subcaptionbox{Trmean, $\beta=-10,q=3$\label{trimmed-mean-10g}}
    {\includegraphics[width=0.32\textwidth]{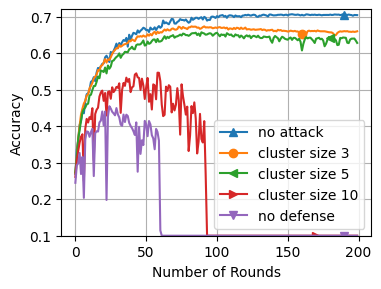}}
\caption{Results of SHARE on CIFAR-10, with varying cluster size across various defenses under FoE attack and $q$ malicious clients out of 60 total clients. Malicious clients send their average update scaled by $\beta$ as indicated in the subfigures. For trimmed mean (Trmean) we remove 2/3 of input updates and use batch size of 128, $\rho=0.00001, \epsilon=0.2$ as Zeno++ parameters.\label{fig: FoECifar}}
\end{figure}
 The results are plotted in Figure \ref{fig: FoECifar}. A similar effect of cluster sizes as discussed in Section \ref{Sec: Experiments} can be seen here. Further, Zeno++ ,as expected, is stable even at higher levels of corruption.
\subsection{Effect of Reclustering with Non-IID data} 
Empirically, we test the effect of reclustering on a heterogeneous data distribution. In particular, we divide CIFAR-10 among clients such that each one gets data from a few classes. In all experiments we use cluster size of 3.
\subsubsection{No Attack}
To create heterogeneous data effect, we split CIFAR-10 data across 60 clients such that each client gets only data consisting of 2 labels. As can be seen in Figure \ref{fig: defresamp}, robust defenses such as coordinate wise median\cite{pillutla2019robust} and Krum\cite{blanchard2017machine} fail in this setting, while SHARE with these defenses performs better as the number of reclustering rounds increases. Further, variance reduction is observed as we increase reclustering rounds ($R$). 
\begin{figure}[h]
\centering
\subcaptionbox{Median \label{2shardmed}}
    {\includegraphics[width=0.43\textwidth]{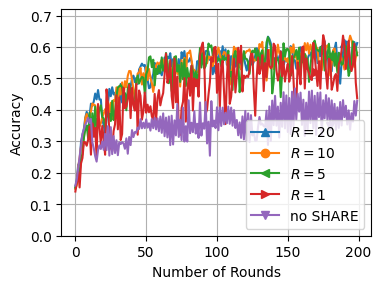}}
\subcaptionbox{Krum \label{krum-g}}
    {\includegraphics[width=0.43\textwidth]{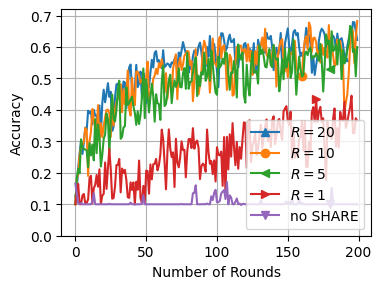}}
\caption{Results of SHARE on non-IID CIFAR-10, where each client has data from 2 labels, across various defenses without attack with 50 clients. We vary the number of reclustering rounds ($R$). No SHARE indicates the baseline defense without the SHARE framework. \label{fig: defresamp}}
\end{figure}
\subsubsection{Fall of Empires Attack}
We consider a lower level of heterogeneity but with client corruption. In particular, each client gets data consisting of 5 labels and $q=3$ malicious clients which collude to send their average model update scaled by $\beta=-10$. The results are shown in Figure \ref{fig: Foeresamp} for various number of reclustering rounds ($R$). 
\begin{figure}[h]
\centering
\subcaptionbox{Median \label{2shardmed}}
    {\includegraphics[width=0.43\textwidth]{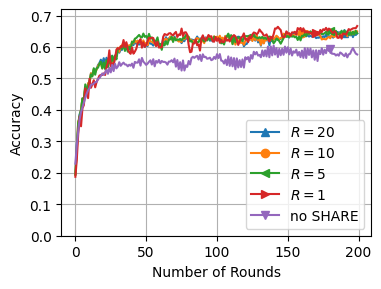}}
\subcaptionbox{Krum \label{krum-g}}
    {\includegraphics[width=0.43\textwidth]{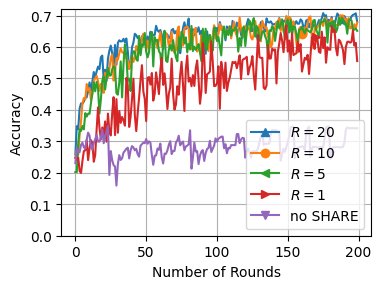}}
\caption{Results of SHARE on non-IID CIFAR-10, where each client has data from 5 labels, across various defenses with FoE attack ($\beta=-10$) and $q=3$ malicious clients out of 50 total. We vary the number of reclustering rounds ($R$). No SHARE indicates the baseline defense without the SHARE framework.\label{fig: Foeresamp}}
\end{figure}

\subsubsection{Label Flip Attack}
We test the effect of SHARE on the label-flip attack with a heterogeneous data split of CIFAR-10. Each client gets data from 5 labels. Malicious clients train on flipped labels, i.e. any label $\in\{0,\hdots,9\}$ is changed to $9-$label. We test trimmed mean (filtering $2/3$ of input updates) and Zeno++ with batch size of 128, $\rho=0.00001, \epsilon=1$ with SHARE and use $\epsilon=5$ without SHARE framework. These parameters are tuned to achieve good performances within their respective frameworks. We consider $R=10$ reclustering rounds and $q=12$ malicious clients. Results as shown in Figure \ref{fig: lfresamp} demonstrate the efficacy of SHARE. 
\begin{figure}[h]
\centering
\subcaptionbox{Trimmed mean \label{2shardmed}}
    {\includegraphics[width=0.43\textwidth]{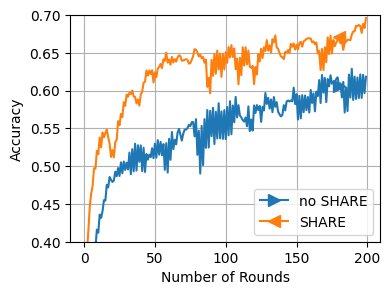}}
\subcaptionbox{Zeno++ \label{krum-g}}
    {\includegraphics[width=0.43\textwidth]{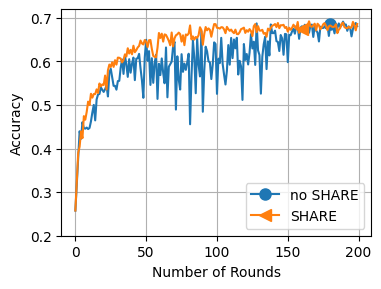}}
\caption{Results of SHARE on non-IID CIFAR-10, where each client has data from 5 labels, across various defenses with label-flip attack, $q=12$ malicious clients out of 60 total, and $R=10$ reclustering rounds. No SHARE indicates the baseline defense without the SHARE framework. \label{fig: lfresamp}}
\end{figure}
\subsubsection{Fixed clustering vs reclustering}
Figure \ref{fig: FoEclust} compares results between fixed clustering and SHARE. In the former, we fix the client clusters before the learning process starts while the latter allows for random reclustering in every round. CIFAR-10 data is split heterogeneously such that each client receives 5 class labels alone. We use Fall of Empires attack with $\beta=-10$ scaling of the average gradients from the malicious clients. Since Zeno++ can tolerate higher levels of corruption, we consider $q=6$ malicious clients, while for trimmed mean, we consider $q=3$. We use $R=10$ reclustering rounds and 60 clients in total with a cluster size of 3. We test trimmed mean (filtering $2/3$ of input updates) and Zeno++ with batch size of 128, $\rho=0.00001, \epsilon=1$ with SHARE and use $\epsilon=5$ for fixed clustering case. These parameters are tuned to achieve reasonable performances within their respective frameworks.
\begin{figure}[h]
\centering
\subcaptionbox{Trimmed mean, $q=3$ \label{2shardmed}}
    {\includegraphics[width=0.43\textwidth]{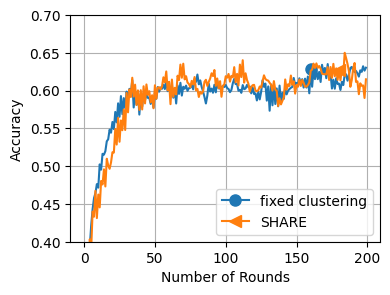}}
\subcaptionbox{Zeno++, $q=6$ \label{krum-g}}
    {\includegraphics[width=0.43\textwidth]{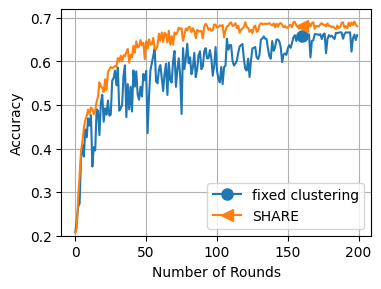}}
\caption{Results of SHARE on non-IID CIFAR-10, where each client has data from 5 labels, across various defenses with FoE attack ($\beta=-10$) and $q$ malicious clients out of 60 total. Fixed clustering indicates the baseline defense with fixed clusters of size 5. \label{fig: FoEclust}}
\end{figure}

\section{Random Reclustering}
In the worst case, a benign client's signal is lost if it is clustered with a malicious client across all reclustering rounds. In particular, the probability that a benign client is effected in a global round is $\left(1-\frac{\binom{n-q}{\frac{n}{m}-1} }{\binom{n}{\frac{n}{m}-1}}\right)^R$. Hence this probability decreases as R (number of reclustering rounds increase).




\end{document}